\documentclass{amsart}

\usepackage{amsmath} 
\usepackage{amssymb}
\usepackage{mathrsfs}

\newtheorem{theorem}{Theorem}[section]

\newtheorem{proposition}[theorem]{Proposition}

\theoremstyle{definition}
\newtheorem{definition}[theorem]{Definition}

\theoremstyle{remark}

\newcount\skewfactor
\def\mathunderaccent#1#2 {\let\theaccent#1\skewfactor#2
\mathpalette\putaccentunder}
\def\putaccentunder#1#2{\oalign{$#1#2$\crcr\hidewidth
\vbox to.2ex{\hbox{$#1\skew\skewfactor\theaccent{}$}\vss}\hidewidth}}

\begin{document}

\title [Logical construction of the ionization energy theory]{Logical construction of the ionization energy theory and the origin of physical categories}
\author {Andrew Das Arulsamy}
\address{Condensed Matter Group, Division of Interdisciplinary Science, F-02-08 Ketumbar Hill, Jalan Ketumbar, 56100 Kuala-Lumpur, Malaysia}
\email{sadwerdna@gmail.com}

\thanks{This work is dedicated to Ambiga Sreenivasan for her unprecedented courage}

\date{\today}

\begin{abstract}
Logical proofs and definitions are developed to establish (1) that the energy-level spacings, $\xi$ for each chemical element (from the periodic table of chemical elements) can be converted to the ionization energies, (2) both $\xi$ and the ionization energies are unique, and (3) the averaged ionization energy of any quantum matter is proportional to the averaged ionization energy of its constituent chemical elements, if and only if $\xi \neq 0$ and $\xi$ is not an irrelevant constant. Physical sets are then constructed such that they are members of a specific physical class where each class belongs to a specific physical category, $\mathscr{P}$. However, there is not a single structure-preserving functor from one energy-level spacing physical category, $^{\xi}\mathscr{P}$ to another $^{\xi}\mathscr{P}'$. Therefore, the existence of many $^{\xi}\mathscr{P}$ implies the existence of different categories of physical systems and quantum matter.
\end{abstract}

\maketitle
{MSC ~2010}: 81-XX
\numberwithin{equation}{section}
\setcounter{section}{0}

\section{Introduction}

In non-relativistic classical mechanics, one employs the Newton's second law such that $F(\textbf{r},t) = m{\rm d}^2\textbf{r}(t)/{\rm d}t^2$, which specifies the trajectory of a particle (mass, $m$), which can be used to determine both the position and momentum of this particle at any given time. In this case, we just need to know $\textbf{r}(t)$, which can be obtained from $F(\textbf{r},t) = -\sum_i\partial V_i(\textbf{r},t)/\partial\textbf{r}(t)$ with an appropriate initial condition, $\textbf{r}(t = 0)$ where each $i$ represents a type of potential acting on the particle. The particle's momentum, $m\dot{\textbf{r}}(t)$ can be obtained from its kinetic energy, $(1/2)m\dot{\textbf{r}}(t)^2$, which means, the particle has a precise momentum and a precise position at any given time $t$. This also means that, a classical particle, which is by definition, stable and has a rigid structure and shape, can exist with a well-defined position and a precise momentum for all time, even if it is unbounded in an empty and free space (with $\sum_iV_i(\textbf{r}) = 0$). Simply put, classical particles are literally ``dead''.

On the other hand, in non-relativistic quantum mechanics, one requires the wave function ($\Psi(\textbf{r},t)$), instead of $\textbf{r}(t)$, to predict any physical quantity associated to this quantum system. In fact, in any quantum system, a well-defined $\textbf{r}(t)$ for a quantum particle (regardless whether this particle is confined within the quantum system or not) does not exist, regardless whether $\sum_iV_i(\textbf{r}) = 0$ (free quantum particle) or not (bounded and/or scattered quantum particle). In fact, the term ``particle'' is used here entirely for convenience, where it should never be regarded as a classical particle. 

For example, quantum theory puts an end to a classical notion that reads, A is made of B, B is made of C, $\cdots$, V is made of Werewolves, and it is Werewolves all the way down. On the contrary, the quantum notion reads, an electron (one of the quantum particles in atoms) has a quantized energy with additional properties attached to this ``energy'', such as electron charge, $-e$, mass, $m$ and spin, $s$, and these quantities do vary quantitatively, depending on the environment, due to screening and interactions. Due to the above quantities ($-e$, $m$ and $s$), we can think of the quantized energy as ``particle-like'', and this is the reason why an electron is sometimes regarded as both particle-like (due to $-e$, $m$ and $s$) and wave-like (due to energy) at the same time, giving rise to the popular notion of particle-wave duality. But this does not imply, in any way, that one can consider an electron as classical particle-like. Therefore, the above-mentioned particle-wave duality (strictly) does not allow the existence of a wave-particle ``thingy'', something like a wave-guiding particle, or a particle-guiding wave. The reason is that such a ``thingy'' will only lead us to ask--- what is the constituent of this wave-guiding particle (or the wave-guided particle)? Obviously, this latter interpretation leads us back to square one (to the classical notion described above).   

Anyway, the proper and real wave function, $\Psi(\textbf{r},t)$ can only be obtained by solving the Schr$\ddot{\rm o}$dinger equation~\cite{sch,sch2,sch3},  
\begin {eqnarray}
i\hbar\frac{\partial \Psi(\textbf{r},t)}{\partial t} = \bigg[-\frac{\hbar^2}{2m}\nabla^2 + V\bigg]\Psi(\textbf{r},t) = H\Psi(\textbf{r},t), ~~~\nabla^2 = \frac{\partial^2}{\partial \textbf{r}^2}. \label{eq:IN1}
\end {eqnarray}  
Here, $H$ and $\textbf{r}$ are the Hamilton and position operators, respectively, $\hbar = h/2\pi$ where $h$ is the Planck constant~\cite{planck}, $m$ denotes the electron mass. This wave function is postulated such that $\Psi(\textbf{r},t)$ completely defines the dynamical state of a given system subject to the Heisenberg uncertainty principle~\cite{heisen}, contrary to the Einstein-Podolski-Rosen arguments that require simultaneous reality~\cite{epr}. The above uncertainty principle gives rise to a probability law in accordance with Born's statistical interpretation of the wave function~\cite{born,born2}. In particular, $\Psi(\textbf{r},t)$ by definition has a certain spatial extension such that one cannot attribute a precise position or a momentum to a quantum particle at time $t$. This means that we can only determine the probability of finding the particle's position or momentum ($\wp$) in a given region of space at time $t$. This probability is given by~\cite{born,born2}     
\begin {eqnarray}
&&\textsc{Prob} = \int_{-\infty}^{\infty}\Psi(\textbf{r},t)^*\Psi(\textbf{r},t){\rm d}\textbf{r} = \int_{-\infty}^{\infty}|\Psi(\textbf{r},t)|^2{\rm d}\textbf{r} = \langle\Psi(\textbf{r},t)|\Psi(\textbf{r},t)\rangle, \label{eq:IN2}
\end {eqnarray}
where $r = \langle\Psi(\textbf{r},t)|\textbf{r}|\Psi(\textbf{r},t)\rangle$, $\langle\wp\rangle = \langle\Psi(\textbf{r},t)|\wp|\Psi(\textbf{r},t)\rangle$, $\Psi(\textbf{r},t)$ is orthonormalized such that $\langle\Psi(\textbf{r},t)|\Psi(\textbf{r},t)\rangle = 1$, $\Psi(\textbf{r},t) = \psi_i(\textbf{r},t)\psi_j(\textbf{r},t)$, $\langle\psi_i(\textbf{r},t)|\psi_j(\textbf{r},t)\rangle = \delta_{ij}$ where $\delta_{ij} = 1$ if $i = j$ and $\delta_{ij} = 0$ if $i \neq j$, and $[\textbf{r},\wp] = \textbf{r}\wp - \wp\textbf{r} = i\hbar$. 

But it turns out, Eq.~(\ref{eq:IN1}) cannot be solved analytically to obtain the real $\Psi(\textbf{r})$ (time-independent) for atoms other than atomic hydrogen if 
\begin {eqnarray}
V = \sum_{i,I}\frac{(-e)(+e)}{4\pi\epsilon_0|\textbf{r}_{i} - \textbf{R}_I|} + \sum_{i \neq j}\frac{(-e)(-e)}{4\pi\epsilon_0|\textbf{r}_{i} - \textbf{r}_j|}, \label{eq:IN2a}
\end {eqnarray}
in which, the first and second terms on the right-hand side of Eq.~(\ref{eq:IN2a}) are the Coulomb potentials due to electron($\textbf{r}_i$)-ion($\textbf{R}_I$) and electron($\textbf{r}_i$)-electron($\textbf{r}_j$) interactions, respectively~\cite{bethe}, where $e$ is electron charge and $\epsilon_0$ is the permittivity of free space. Here, the nucleus coordinate for atoms can be positioned at the origin where $\textbf{R}_I = \textbf{R} = 0$. 

\subsection{Hartree-Fock theory}

In response to Eq.~(\ref{eq:IN2a}), all practical applications in natural sciences requires one resorting to finding the unique but ``unreal'' $\Psi(\textbf{r})$ by other means, namely, via the linear combination of atomic orbitals, molecular orbitals, valence bonds, Hartree-Fock-Slater wave functions, and their improved variants in the chemical sciences~\cite{levin,par,pople}. Whereas, for physical sciences and solids, we have the density functional theory, which is based on the Hohenberg-Kohn-Sham variational principle that relies on the existence of a one-to-one correspondence between the potential ($V^{\rm HKS}(\textbf{r})$) and the total electron density~\cite{kohn,kohn2,kohn3,pople2,kaxi}. Here, ``unreal'' $\Psi(\textbf{r})$ simply means that $\Psi(\textbf{r})$ is not obtained by solving a given Hamiltonian, or $\Psi(\textbf{r})$ is a general solution obtained or generated by some educated guesses. Anyway, to see the importance of this Hohenberg-Kohn-Sham principle, we need to recall the Hartree-Fock total energy ($E^{\rm HF}$)~\cite{kaxi}. After invoking the Born-Oppenheimer approximation~\cite{bo}    
\begin {eqnarray}
E^{\rm HF} &=& \langle\Psi^{\rm HF}(\textbf{r})|H|\Psi^{\rm HF}(\textbf{r})\rangle \nonumber \\&=&
\int \psi^{\rm HF}_i(\textbf{r}_i)^*\bigg[\sum_i\frac{-\hbar^2}{2m}\nabla_i^2 - \frac{e^2}{4\pi\epsilon_0}\bigg(\sum_{i,I}\frac{Z_I}{|\textbf{R}_I-\textbf{r}_i|} - \frac{1}{2}\sum_{I \neq J}\frac{Z_IZ_J}{|\textbf{R}_I-\textbf{R}_J|}\bigg)\bigg]\psi^{\rm HF}_i(\textbf{r}_i){\rm d}\textbf{r}_i \nonumber \\&& + \frac{1}{2}\sum_{i \neq j}\int\psi_j^{\rm HF}(\textbf{r}_j)^*\psi_i^{\rm HF}(\textbf{r}_i)^*\frac{e^2}{4\pi\epsilon_0|\textbf{r}_i-\textbf{r}_j|}\psi_i^{\rm HF}(\textbf{r}_i)\psi_j^{\rm HF}(\textbf{r}_j){\rm d}\textbf{r}_i{\rm d}\textbf{r}_j \nonumber \\&& + \frac{1}{2}\sum_{i \neq j}\int\psi_j^{\rm HF}(\textbf{r}_j)^*\psi_i^{\rm HF}(\textbf{r}_i)^*\frac{e^2}{4\pi\epsilon_0|\textbf{r}_i-\textbf{r}_j|}\big[-\psi_i^{\rm HF}(\textbf{r}_j)\psi_j^{\rm HF}(\textbf{r}_i)\big]{\rm d}\textbf{r}_i{\rm d}\textbf{r}_j, \nonumber \\&& \label{eq:IN3} 
\end {eqnarray}
where~\cite{sla,sla2,sla3}
\begin {eqnarray}
\Psi^{\rm HF}(\textbf{r}_1, \textbf{r}_2, \cdots, \textbf{r}_n) &=& \big[\psi^{\rm HF}_1(\textbf{r}_1)\psi^{\rm HF}_2(\textbf{r}_2) \cdots \psi^{\rm HF}_n(\textbf{r}_n)\big] \nonumber \\&& -\big[\psi^{\rm HF}_1(\textbf{r}_2)\psi^{\rm HF}_2(\textbf{r}_1) \cdots \psi^{\rm HF}_n(\textbf{r}_n)\big] + \cdots,  \label{eq:IN4} 
\end {eqnarray}
and the negative signs in the last term on the right-hand side of Eq.~(\ref{eq:IN3}) and in Eq.~(\ref{eq:IN4}) are due to the Pauli exclusion principle~\cite{pauli} 
\begin {eqnarray}
\Psi^{\rm HF}(\textbf{r}_1,\textbf{r}_2) = -\Psi^{\rm HF}(\textbf{r}_2,\textbf{r}_1). \label{eq:IN5} 
\end {eqnarray}
Obviously, Eq.~(\ref{eq:IN3}) cannot be solved for any real system to obtain $\Psi^{\rm HF}(\textbf{r})$, however one can guess the general solution, $\Psi^{\rm HF}(\textbf{r}_i)$ that solves Eq.~(\ref{eq:IN3}) by iteration (numerically). Real systems here mean atoms (other than atomic hydrogen), molecules and solids. For example, Eq.~(\ref{eq:IN3}) can be solved exactly if $\Psi^{\rm HF}(\textbf{r}_i)$ is taken to be the set of plane waves (that represent free electrons), and if $\frac{1}{2}\sum_{I \neq J}\frac{Z_IZ_Je^2}{4\pi\epsilon_0|\textbf{R}_I-\textbf{R}_J|}$ is approximated as a constant independent of the electronic wave functions~\cite{ps} due to the Born-Oppenheimer approximation~\cite{bo}. Equation~(\ref{eq:IN3}) becomes computationally expensive (because large number of iterations are required) with increasing number of atoms or electrons due to Eq.~(\ref{eq:IN4}). 

\subsection{Density functional theory}

As a consequence of large number of required iterations, further simplifications are required to even obtain a proper numerical solution ($\Psi^{\rm HF}(\textbf{r})$) to Eq.~(\ref{eq:IN3}) for solids. One such simplified approximation is provided by the density functional theory (DFT)~\cite{kohn,kohn2,kohn3,pople2}. Contrary to the Hartree-Fock theory (Eq.~(\ref{eq:IN3})) that needs a proper general solution, $\Psi^{\rm HF}(\textbf{r})$ with respect to Eq.~(\ref{eq:IN4}) (otherwise, the total energy may not approach the real value or converge), DFT just requires any general solution, namely, $\Psi^{\rm DFT}(\textbf{r})$. Meaning, with less microscopic details incorporated into $\Psi^{\rm DFT}(\textbf{r})$. However, $\Psi^{\rm DFT}(\textbf{r})$ needs to be iterated in order to obtain the correct electron density such that~\cite{kohn,kohn2,kohn3,pople2}   
\begin {eqnarray}
\langle\Psi^{\rm DFT}(\textbf{r})|V^{\rm external}(\textbf{r})|\Psi^{\rm DFT}(\textbf{r})\rangle &=& \sum_i\int\psi^{\rm DFT}(\textbf{r}_i)^*V^{\rm external}(\textbf{r}_i)\psi^{\rm DFT}(\textbf{r}_i) {\rm d}\textbf{r}_i \nonumber \\&=& \int n(\textbf{r})V^{\rm external}(\textbf{r}){\rm d}\textbf{r}, \label{eq:IN6} 
\end {eqnarray}
where $n(\textbf{r})$ is the ground state electron density. The Hohenberg-Kohn-Sham theorem states that each unique electron density corresponds to an external potential such that $n(\textbf{r}) \Rightarrow V^{\rm external}(\textbf{r})$ and $n(\textbf{r})' \Rightarrow V^{\rm external}(\textbf{r})'$ where $n(\textbf{r}) \neq n(\textbf{r})'$, $V^{\rm external}(\textbf{r}) \neq V^{\rm external}(\textbf{r})'$ and $V^{\rm external}(\textbf{r})$ is different from $V^{\rm external}(\textbf{r})'$ in a nontrivial way. For example,~\cite{kohn,kohn2,kohn3,pople2}  
\begin {eqnarray}
V^{\rm external}(\textbf{r}) = -\sum_{i,I}\frac{Z_Ie^2}{4\pi\epsilon_0|\textbf{R}_I-\textbf{r}_i|}, \label{eq:IN7} 
\end {eqnarray}
which is a sufficient representative. This implies $V^{\rm external}(\textbf{r})$ is not uniquely defined, which means, it does not have to be defined by Eq.~(\ref{eq:IN7}), instead it can be any other proper potential function. 

\texttt{NOTE 1}: Even though DFT does not deal with a unique potential function for a given unique true electron density, for example, DFT deals only with an arbitrary $V^{\rm external}(\textbf{r})$, which exclusively gives a unique $n(\textbf{r})$. In contrast, IET logically proves the existence of a unique potential and Hamiltonian due to the unique numbers of electrons ($\zeta$), protons ($\tau$) and neutrons ($\eta$) in a quantum system via the unique function, $f(\zeta,\eta,\tau)$. The uniqueness of $f(\zeta,\eta,\tau)$ implies the ionization energies (energy levels) and the energy-level spacings associated to an atom or a molecule or a solid are also unique.  

Anyway, the DFT ground state energy based on the Hohenberg-Kohn variational theorem, after invoking the Born-Oppenheimer approximation, can be obtained from~\cite{kohn,kohn2,kohn3,pople2} 
\begin {eqnarray}
E[n_0] \leq E^{\rm DFT}[n] &=& \int n(\textbf{r})V^{\rm external}(\textbf{r}){\rm d}\textbf{r} + \sum_i\bigg\langle\psi^{\rm DFT}(\textbf{r}_i)\bigg|\frac{-\hbar^2}{2m}\nabla_i^2\bigg|\psi^{\rm DFT}(\textbf{r}_i)\bigg\rangle \nonumber \\&& + \sum_{i \neq j}\bigg\langle\psi^{\rm DFT}(\textbf{r}_j)\psi^{\rm DFT}(\textbf{r}_i)\bigg|\frac{e^2}{4\pi\epsilon_0|\textbf{r}_i-\textbf{r}_j|}\bigg|\psi^{\rm DFT}(\textbf{r}_i)\psi^{\rm DFT}(\textbf{r}_j)\bigg\rangle \nonumber \\&=& \int n(\textbf{r})V^{\rm external}(\textbf{r}){\rm d}\textbf{r} + \langle T\rangle[n] + \langle V_{\rm ee}\rangle[n], \label{eq:IN8} 
\end {eqnarray}
while the true ground state energy functional, by definition, is given by~\cite{kohn,kohn2,kohn3,pople2}
\begin {eqnarray}
&&E[n_0] = \int n_0(\textbf{r})V^{\rm external}(\textbf{r}){\rm d}\textbf{r} + \langle T\rangle[n_0] + \langle V_{\rm ee}\rangle[n_0], \label{eq:IN9} 
\end {eqnarray}
where $n_0$ is the true ground state electron density, and Eq.~(\ref{eq:IN9}) is exact in principle. 

\texttt{NOTE 2}: From now on, we no longer discuss how $\psi^{\rm DFT}(\textbf{r})$ is determined or approximated (in terms of Kohn-Sham orbitals) because our focus here is to find any association and/or differences between DFT and the ionization energy theory (IET), independent of the wave function. The reason is that the IET-construction and proofs (developed in the subsequent sections) do not require any explicit knowledge on any real or arbitrary wave functions. Of course, for every quantum system, there exist a corresponding real (unique and true) wave function and a real Hamiltonian (see \texttt{NOTE} 1).

Equations~(\ref{eq:IN8}) and~(\ref{eq:IN9}) are both exact in principle, however, the mathematical structure of these functionals, $\langle T\rangle[n]$, $\langle V_{\rm ee}\rangle[n]$, $\langle T\rangle[n_0]$ and $\langle V_{\rm ee}\rangle[n_0]$ are unknown. Therefore, one needs to invoke the Kohn-Sham method to rewrite Eq.~(\ref{eq:IN8}), where one first define a non-interacting reference Hamiltonian~\cite{kohn,kohn2,kohn3,pople2}  
\begin {eqnarray}
&&H^{\rm KS}_{\rm reference} = -\sum_i\frac{\hbar^2}{2m}\nabla_i^2 - \frac{e^2}{4\pi\epsilon_0}\bigg(\sum_{i,I}\frac{Z_I}{|\textbf{R}_I-\textbf{r}_i|}\bigg), \label{eq:IN10} 
\end {eqnarray}
and the above-mentioned unknown functional (given in Eqs.~(\ref{eq:IN8}) and~(\ref{eq:IN9})) are defined as follows~\cite{kohn,kohn2,kohn3,pople2}
\begin {eqnarray}
&&\Delta \langle T\rangle[n] = \langle T\rangle[n] - \langle T_{\rm reference}^\textbf{\rm KS}\rangle[n], \label{eq:IN11} 
\end {eqnarray}
where $\langle T\rangle[n]$ is the average ground state electronic kinetic energy functional, $\langle T_{\rm reference}^\textbf{\rm KS}\rangle[n]$ can be obtained from Eq.~(\ref{eq:IN10}), and
\begin {eqnarray}
&&\Delta \langle V_{\rm ee}\rangle[n] = \langle V_{\rm ee}\rangle[n] - \frac{1}{2}\int \frac{n(\textbf{r}_i)n(\textbf{r}_j)}{|\textbf{r}_i - \textbf{r}_j|}{\rm d}\textbf{r}_i{\rm d}\textbf{r}_j.\label{eq:IN12} 
\end {eqnarray}
Here $\Delta \neq \nabla^2$ and $\langle V_{\rm ee}\rangle[n]$ is the average ground state electron-electron interaction potential energy functional. The second term on the right-hand side of Eq.~(\ref{eq:IN12}) records the changes to the electrostatic repulsion between electrons. Substituting Eqs.~(\ref{eq:IN11}) and~(\ref{eq:IN12}) into Eq.~(\ref{eq:IN8}) leads us to~\cite{levin,par} 
\begin {eqnarray}
E[n] &=& \int n(\textbf{r})V^{\rm external}(\textbf{r}){\rm d}\textbf{r} + \langle T^{\rm KS}_{\rm reference}\rangle[n] + \frac{1}{2}\int \frac{n(\textbf{r}_i)n(\textbf{r}_j)}{|\textbf{r}_i - \textbf{r}_j|}{\rm d}\textbf{r}_i{\rm d}\textbf{r}_j \nonumber \\&& + \Delta\langle V_{\rm ee}\rangle[n] + \Delta\langle T\rangle[n], \label{eq:IN13} 
\end {eqnarray}
in which, $\Delta\langle V_{\rm ee}\rangle[n] + \Delta\langle T\rangle[n] = E_{\rm xc}[n] = E_{\rm x}[n] + E_{\rm c}[n]$ remains unknown, where $E_{\rm xc}[n]$ is defined as the exchange(x)-correlation(c) energy functional~\cite{levin,par}. There are several approximations available to calculate $E_{\rm xc}[n]$, namely, the local density ($n(\textbf{r})$) and local spin density ($n^{\uparrow}(\textbf{r})$, $n^{\downarrow}(\textbf{r})$) approximations (LDA and LSDA), the $X\alpha$ method that completely ignores the correlation-energy functional by assuming $E_{\rm c} \ll E_{\rm x}$, where $E_{\rm x} \approx E^{\rm LDA}_{\rm x}$. Here, the spin, $s$ denotes spin-up $\uparrow$ or -down $\downarrow$. The Hartree-Fock functional DFT uses Kohn-Sham orbitals instead of $\Psi^{\rm HF}(\textbf{r})$ (given in Eq.~(\ref{eq:IN3})) to obtain $E_{\rm x}$ (given by the last term on the right-hand side of Eq.~(\ref{eq:IN3})). In this approach, $E_{\rm c}$ is calculated from LDA or LSDA method. Finally, we also have the gradient-corrected hybrid functionals that consider $n^{\uparrow}(\textbf{r})$ and $n^{\downarrow}(\textbf{r})$, as well as the gradient-corrected electron densities ($\nabla n^{\uparrow}(\textbf{r})$ and $\nabla n^{\downarrow}(\textbf{r})$)~\cite{levin,par}. This gradient correction is also known as the generalized gradient approximation (GGA). It is strange to call the above ``approximations'' as approximations because the exact $E_{\rm xc}$ is unknown. For example, one needs to know the exact exchange-correlation functional in order to derive the approximated functional. Hence, the above ``approximations'' seem to be educated guesses, rather than approximations. 

But never mind, physically, $E_{\rm xc}[n]$ exists due to two effects--- (i) electron exchange that gives rise to exchange energy, $E_{\rm x}$, which is explicitly given by the last term on the right-hand side of Eq.~(\ref{eq:IN3}), and (ii) changing electron-electron repulsion and screening that gives rise to changing correlation energy, $E_\textbf{\rm c}$ as a result of other electrons' displacements. In view of $E_{\rm xc}$, we can observe that the Hartree-Fock theory properly takes $E_{\rm x}$ into account, but completely ignores $E_{\rm c}$. In contrast, DFT considers the exchange-correlation functional, $E_{\rm xc}[n]$, as it should be, compared to the Hartree-Fock theory. 

\texttt{NOTE 3}: In IET, the total energy is defined by the IET-Schr$\ddot{\rm o}$dinger equation~\cite{ADA1}, 
\begin {eqnarray}
&&i\hbar\frac{\partial \Psi(\textbf{r},t)}{\partial t} = \bigg[-\frac{\hbar^2}{2m}\nabla^2 + V_{\rm IET}\bigg]\Psi(\textbf{r},t) = H_{\rm IET}\Psi(\textbf{r},t) = (E_0 \pm \xi)\Psi(\textbf{r},t), \label{eq:IN14}
\end {eqnarray}  
where $E_0 \pm \xi = E$ is the real (true and unique) energy for a given quantum system (atom or molecule or solid or any quantum matter in between) for both degenerate (trivial because $\xi \rightarrow 0$) and non-degenerate energy levels~\cite{ADA1} and for quantum systems with energy-level crossings~\cite{ADA2}. Here, $E_0$ is the total energy at zero temperature and in the absence of any external disturbances, while $\xi$ is known as the ionization energy or the energy-level spacing~\cite{ADA5} where $+\xi$ is for electrons and $-\xi$ is for holes. The real eigenvalue, $E_0 \pm \xi$ cannot be obtained from the IET because both $V_{\rm IET}$ and $\Psi(\textbf{r},t)$ are unknown~\cite{ADA1}, and we did not bother to find them. To obtain the real value for $E_0 \pm \xi$, we need to know the real $\Psi(\textbf{r},t)$ and the real $V_{\rm IET}$. But, we can attempt to obtain the real energy eigenvalue, or close to the real one by making use of the relation, $E_0 \pm \xi = E = E^{\rm HF}$ (from Eq.~(\ref{eq:IN3})) or $E_0 \pm \xi = E = E[n]$ (from Eq.~(\ref{eq:IN13})) via the Hartree-Fock theory or DFT, respectively. 

The primary motivation for developing IET is not to calculate these energy eigenvalues, but to predict the changes to numbers associated to any physical quantities both qualitatively and quantitatively, without relying on wave functions because the real $\Psi(\textbf{r},t)$ remains, and will remain unknown, regardless whether one is able to know the real $E_{\rm xc}$, which is also unknown. Instead, our intention is to derive valid analytic equations from IET for all quantum matter, particularly, for the non-free-electron types by relying on the atomic energy-level spacings. These analytic equations can be exploited to develop theoretical models or to evaluate relevant physical quantities with respect to changing interaction strengths. This changing interaction strengths arise when one changes any (or all) of these numbers, electrons ($\zeta$), protons ($\tau$) and neutrons ($\eta$) in a given quantum system (recall \texttt{NOTE}: 1), which can be done either (i) by changing the number of atoms or the types of atoms in a given quantum low-energy system via chemical reactions, or (ii) via some nuclear reactions in a high-energy physical system. 

In order to achieve our objective to derive analytic functions, we make use of this approximation (also known as the ionization energy approximation) such that~\cite{ADA1}
\begin {eqnarray}
&& \bigg[-\frac{\hbar^2}{2m}\nabla^2 + V_{\rm IET}\bigg]\Psi(\textbf{r}) = H_{\rm IET}\Psi(\textbf{r}) \propto (E_0 \pm E_{\rm I})\Psi(\textbf{r}). \label{eq:IN15}
\end {eqnarray}  
The first part of this paper is to prove the proportionality (exists due to $\xi \propto E_{\rm I}$) given in Eq.~(\ref{eq:IN15}), where $\xi$ is the real energy-level spacing, while $E_\textbf{\rm I}$ denotes the unreal or the approximated energy-level spacing. We also have proven that $V_{\rm IET} \propto E_{\rm I}$ in accordance with atomic He, which can be understood from the following equations. First, we define  
\begin {eqnarray}
&& V_{\rm IET} = V_{\rm external}(r) + \tilde{V}_{\rm sc}(\textbf{r},\sigma) = -\frac{e^2}{4\pi\epsilon_0}\bigg[\frac{Z}{r_i} - \frac{1}{|\textbf{r}_i-\textbf{r}_j|}e^{-\sigma(r_i - r_j)}\bigg], \label{eq:IN16}
\end {eqnarray}  
where $Z$ is the atomic number, $\tilde{V}_{\rm sc}$ denotes the renormalized screened Coulomb potential. Recall here that $\textbf{R}_I = \textbf{R} = 0$ because we are evaluating an atom (He), while $\sigma$ is given by~\cite{ADA4}
\begin {eqnarray}
&&\sigma = \mu\exp{\bigg[-\frac{1}{2}\lambda\xi\bigg]}, \label{eq:IN17}
\end {eqnarray}  
where $\mu$ is the constant of proportionality, $\lambda = (12\pi\epsilon_0/e^2)a_{\rm B}$, $a_{\rm B}$ is the Bohr radius~\cite{ADA4}, and we can observe that $\tilde{V}_{\rm sc}(\textbf{r},\sigma) \propto E_{\rm I}$~\cite{ADA1}. The above renormalization procedure is exact and is based on the energy-level spacing renormalization group method~\cite{ADA5}. To see why the energy-level spacing renormalization method is exact, we give the following example. The pure exchange-energy functional (obtained by solving the last term on the right-hand side of Eq.~(\ref{eq:IN3}) using plane waves)~\cite{gell}, 
\begin {eqnarray}
&&E_{\rm x}[n] = -\frac{3}{4}\frac{e^2}{\pi}Nk_{\rm F} = -\frac{3e^2}{4}\bigg(\frac{3}{\pi}\bigg)^{1/3}\int \big[n(\textbf{r})\big]^{4/3}{\rm d}\textbf{r}, \label{eq:IN18}
\end {eqnarray}
while the renormalized $E_{\rm x}$ is given by
\begin {eqnarray}
&&\tilde{E}_{\rm x}[n] = -\frac{3}{4}\frac{e^2}{\pi}\tilde{N}\tilde{k}_{\rm F} = -\frac{3e^2}{4}\bigg(\frac{3}{\pi}\bigg)^{1/3}e^{-\frac{4}{3}\lambda\xi}\int \big[n(\textbf{r})\big]^{4/3}{\rm d}\textbf{r}, \label{eq:IN19}
\end {eqnarray}
where, $N = \int n(\textbf{r}){\rm d}\textbf{r}$, $\tilde{N} = e^{-\lambda\xi}\int n(\textbf{r}){\rm d}\textbf{r}$, $k_{\rm F} = (3\pi^2 n(\textbf{r}))^{1/3}$, $\tilde{k}_{\rm F} = e^{-(1/3)\lambda\xi}k_{\rm F}$ and $k_{F}$ denotes the Fermi wave number, which defines the Fermi surface in momentum-space. Reference~\cite{ADA4} contains the proofs and details on the above renormalization where $\tilde{n} = ne^{-\lambda\xi}$. The unrenormalized variable does not carry a tilde, and Eq.~(\ref{eq:IN19}) is exact by noting that when $\xi \rightarrow 0$, the electrons transform from being strongly-correlated to free electrons, while the strength of interaction increases with increasing $\xi$. In addition, $E_{\rm x}$ decreases with increasing $\xi$ as it should be because the outer-most electron will tend to stay as far away as possible from the core electrons if the energy-level spacing between the outer-most and the core electrons is large. Here, large $\xi$ also means a reduced screening effect, which will lead to a large electron-electron Coulomb repulsion~\cite{ADA1}.

The above renormalization procedure can also be related to Shankar renormalization technique reported in Refs.~\cite{shank1,shank2,shank3}, and the relevant proofs are given in Ref.~\cite{ADA5}. However, the Shankar renormalizer, $\Lambda_{\rm Shankar}$ is wave-number ($k$) dependent, while $\Lambda_{\rm IET}$ is $\lambda\xi$ dependent, and consequently, $\Lambda_{\rm IET}$ is dimensionless. This also means that (due to $\xi$) $\Lambda_{\rm IET}$ measures the interaction strengths with respect to the numbers and types of atoms in a given quantum matter because $\xi$ is unique for each different quantum system (due to different numbers of electrons ($\zeta$), protons ($\tau$) and neutrons ($\eta$)). This is exactly what we have wanted, and we have gotten it. What remains to be proven rigorously is $\xi \propto E_{\rm I}$ where this proportionality has been physically (in terms of probability) shown to be valid in Refs.~\cite{ADA7,ADA8}.

In summary, we have noted that the Hartree-Fock theory considers the electron-electron, electron-ion and the exchange interactions correctly, but completely ignores the correlation between electrons because each of these terms are isolated. For instance, changes in electron-electron repulsion is independent of the other two interactions (electron-ion and exchange), and vice versa. Therefore, the Hartree-Fock theory is suitable for non-interacting and weakly interacting systems. On the other hand, DFT does take the electron correlation effect into account by means of $E_{\rm xc}[n]$, but it is unable to handle $E_{\rm xc}[n]$ properly because $E_{\rm xc}[n]$ needs to be approximated, namely, LDA, LSDA and GGA by writing $E_{\rm xc}[n] = E_{\rm x}[n] + E_{\rm c}[n]$. In addition, $E_{\rm c}[n]$ is an unknown functional, and therefore, we do not know how and why $E_{\rm x}[n]$ can change when the strength of electronic correlation increases or decreases. For example, see Eq.~(\ref{eq:IN19}) to observe why and how the electron-electron correlation (captured by $\xi$) creates an additional constraint on $E_{\rm x}[n]$, such that, increasing repulsion between electrons (increasing $\xi$) gives rise to a decreasing $E_{\rm x}[n]$. Their dependence (between $E_{\rm x}[n]$ and $E_{\rm c}[n]$) can only be known if we know the analytic functional for $E_{\rm c}[n]$.

Obtaining the above-mentioned functionals are not easy. Consequently, we made use of the ionization energy theory to evaluate the changing interaction strengths by first acknowledging the existence of real (unique) energy-level spacings ($\xi$) for each quantum matter. This (due to changing interaction strength) also implies the possibility to obtain or define different classes of quantum matter. Subsequently, we can then use the energy-level spacing renormalization group method to derive the relevant renormalized theoretical models and analytic functions for a given solid or molecule. Almost all (if not all) of the important interactions (electron-electron, electron-ion, and spin-exchange) in atoms, also exist in solids and molecules. This statement can only become stronger if we compare between a solid or a molecule and their respective constituent atoms. This means that, logically one has $\xi_{\rm solids} \propto \xi_{\rm atoms}$ or $\xi_{\rm molecules} \propto \xi_{\rm atoms}$ due to the above similar types of interactions. Here, we have assumed that the many-body interaction due to different crystal or molecular structures is a constant. 

However, this many-body contribution can also be a non-trivial function (not a constant), which implies that these crystal- or molecular-structure effects may give rise to a QPT. In particular, non-constant crystal- or molecular-structure effects or due to some other external disturbances can cause $\xi$ to vary significantly. For example, when $\xi \rightarrow 0$ or $\xi \rightarrow$ constant or $\xi \rightarrow$ finite value (not a constant) or $\xi \rightarrow \infty$, and such changes will lead us to different types of trivial ($\xi \rightarrow 0$ and $\xi \rightarrow \infty$) and non-trivial quantum phase transitions (QPT) in solids and molecules. Hence, it all comes down to proving $\xi_{\rm solids} \propto \xi_{\rm atoms}$, $\xi_{\rm molecules} \propto \xi_{\rm atoms}$, and also proving the existence of energy-level spacing physical category for atoms that can be used to build different classes of quantum matter. The existence of physical categories will be proven in the second part of this paper, after proving the above proportionality (the ionization energy approximation). 

\section{The ionization energy theory uncovered}

Our objective here is to establish the correctness of the so-called ionization energy theory (IET) such that it does not violate established logical foundation and mathematics. The necessary step required to achieve this is to develop rigorous proofs uncovering the ionization energy theory such that its mathematico-logical foundation can be proven to be free of all logical and technical errors, as well as to expose any \textit{ad hoc} assumption, if there is any.

Here, we list all the propositions and definitions required to properly construct the ionization energy theory. We will prove all propositions, and will attempt to prove all statements. However, any unproven statements should be regarded as conjectures and/or physical expositions, and such statements are not needed for the logical construction of the ionization energy theory. Statements or equations that are used here, but have been explicitly proven elsewhere are referred to the appropriate references. The proofs here are developed such that they are not only rigorous but also contain, where appropriate, expositions on why and how these proofs are related to physical and chemical systems. We also wish to inform you that excessive use of symbols has been deliberately avoided for all possible cases, unless a compact form is required. It is to be noted here that particular care has been taken in constructing the sentences in which, the word \texttt{OR} in a sentence simply means either this or that, not both, whereas, the logical \texttt{OR} is symbolically denoted by $\vee$ to mean, either this or that or both. In sentences, we use \texttt{AND/OR} to denote $\vee$. Other logical notations and symbols, namely, \texttt{AND} ($\wedge$), \texttt{IMPLIES} ($\Rightarrow$), \texttt{NOT} ($\neg$) and \texttt{IF AND ONLY IF} ($\Leftrightarrow$) are valid both in mathematical forms and in sentences. 

The ionization energy theory is related to the chemical elements listed in the periodic table of chemical elements, and therefore, some basic properties of these elements and their connections to the energy levels are briefly explained below. However, how and why these energy levels can be associated to the ionization energies are given at a later stage, namely, in \ref{D13}, \ref{P14} and \ref{D15} because the technical reasons are not trivial.   

All the chemical elements in the periodic table are arranged in accordance to their atomic numbers $Z$, and each chemical element has unique energy levels, arising from the different numbers of protons ($\tau$), neutrons ($\eta$) and electrons ($\zeta$), and the physical interactions among them. Each electron either has spin-up ($\hbar/2$), or -down ($-\hbar/2$) where $\{\eta\} \in \mathbb{N}$, $\{\tau,\zeta\} \in \mathbb{N}^*$ and $\hbar = h/2\pi$, $h$ is the Planck constant. Here, atomic hydrogen (H), H$_2^+$ (molecular ion) and H$_2$ (molecule) all have zero neutron, and therefore, $\mathbb{N}$ and $\mathbb{N}^*$ are the sets of natural numbers, including and excluding zero, respectively. Moreover, $Z = \tau = \zeta$ for any neutral chemical element (or atom), molecule or compound (contains atoms of the order of Avogadro number, 10$^{23}$, though this is not necessarily true for nanoparticles). In particular, the number of atoms in a given nanoparticle can be of the order of 10 to 10$^3$ atoms, for example, see Refs.~\cite{ja,ja2,ja3,ja4}. When the chemical elements or molecules combine to form compounds \textit{via} chemical reactions, then these compounds too, have unique energy-level spacings ($\xi \geq 0 \wedge \{\xi\} \in \mathbb{R}^+$) such that

\begin{proposition}
\label{P1}
One can form any number of sets from these compounds to prove the existence of different classes of quantum matter.
\end{proposition}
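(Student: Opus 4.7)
The plan is to regard each compound as an object pinned down by its constituent atomic signature and bonding structure, so that its energy-level spacing $\xi$ is an invariant of the object; a family of subsets of the ambient universe of compounds, stratified by $\xi$, will then witness the existence of different classes of quantum matter. First I would fix a universe $\cU$ of compounds built from the periodic table (atoms, molecules, ions, bulk solids, and the products of all admissible chemical reactions). Because $\{\tau,\zeta\}\in\mathbb{N}^{*}$ and $\{\eta\}\in\mathbb{N}$ for each constituent atom, the map sending a compound to its $(\zeta,\tau,\eta)$-profile takes values in a countable set, and standard Separation guarantees that $\cU$ and all of its subcollections are legitimate sets.

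Next, I would use the uniqueness principle stated in the introduction: distinct $(\zeta,\tau,\eta)$-data produce distinct $\xi$-values, so the assignment $C\mapsto\xi(C)$ on $\cU$ is non-constant. Set $C\sim C'$ iff $\xi(C)=\xi(C')$ and take the blocks of $\cU/{\sim}$ as the candidate physical classes. Because chemical reactions provide infinitely many chemically distinct compounds, the range of $\xi$ is an infinite subset of $\mathbb{R}^{+}$, so one can exhibit arbitrarily large indexed families $\{S_\alpha:\alpha\in I\}$ of pairwise disjoint subsets of $\cU$, each $S_\alpha$ consisting of compounds sharing a common $\xi$-value. This yields the ``any number of sets'' clause of the proposition.

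The step I expect to be most delicate is not the set-theoretic bookkeeping but the substantive argument that distinct $\xi$-blocks genuinely deserve the label ``different classes of quantum matter.'' For that I would appeal to Eq.~(\ref{eq:IN14}) together with \texttt{NOTE 1}: two compounds with different $\xi$ have different IET-Hamiltonians $H_{\mathrm{IET}}$ and different true eigenvalues $E_{0}\pm\xi$, and therefore cannot be conflated as physical systems within the IET framework. I would close by noting that this proposition is deliberately weak, providing only the set-theoretic scaffolding for later definitions that will promote the $\xi$-blocks to objects of physical categories $^{\xi}\mathscr{P}$; the proposition itself is secured once one combines the uniqueness of $\xi$, the non-triviality of its range, and the ability to take arbitrary subcollections of $\cU$.
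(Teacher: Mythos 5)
Your route is genuinely different from the paper's. The paper proves this proposition by deferral: it constructs the energy-level spacing physical categories $^{\xi}\mathscr{P}$ explicitly in \ref{P21} (singletons as objects, morphisms given by increments of $f(\zeta,\eta,\tau)$, with associativity and identities checked separately for atoms/ions, molecules/molecular ions, and compounds), strips them down to semicategories and semigroups in \ref{P22}, and then certifies that the resulting classes are \emph{different} classes of quantum matter via \ref{P25}, i.e.\ via the non-existence of any covariant or contravariant functor between two distinct $^{\xi}\mathscr{P}$. In other words, ``different classes'' is cashed out structurally --- there is no structure-preserving map between the categories --- not merely by exhibiting disjoint subsets of a universe $\cU$. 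Your set-theoretic stratification is more elementary and makes the ``any number of sets'' clause transparent, but it bypasses the morphism/functor machinery that is, in the paper, the entire content of the word ``different.''

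There is also a concrete defect in the partition you propose. By \ref{P5} the map $C\mapsto\xi(C)$ is asserted to be injective on distinct systems, so the equivalence relation $C\sim C'\Leftrightarrow\xi(C)=\xi(C')$ degenerates: every block of $\cU/{\sim}$ is a singleton, and you recover only the paper's \ref{D8}-style singletons, not its classes (which collect \emph{many} singletons, e.g.\ all of $\{{\rm H_2O}^{\alpha'-}\},\dots,\{{\rm H_2O}^{\omega'+}\}$ into one class $\mathcal{C}_{\rm H_2O}$, and all $\{{\rm Si}_{x_i}{\rm C}_{1-x_i}\}$ into $\mathcal{C}_{\beta}$). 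Conversely, if one weakens uniqueness so that blocks are nontrivial, grouping by coincidence of $\xi$-values is exactly what the paper warns against in its discussion of ``false correspondences'' between $({\rm Li,Na,K,Rb}){\rm Cl}$ and ${\rm Si}_{x_i}{\rm C}_{1-x_i}$: numerically matching spacings do not make two systems the same kind of matter. So the delicate step you flagged --- why $\xi$-blocks deserve to be called different classes --- is not actually discharged by appealing to Eq.~(\ref{eq:IN14}) and \texttt{NOTE 1}, which distinguish individual Hamiltonians rather than classes; the paper's answer is the functorial obstruction of \ref{P25}, and your argument would need some analogue of it.
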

\begin{proof}
See \ref{P21} and \ref{P22}.
\end{proof}

To prove \ref{P1}, one needs to invoke another proposition due to the ionization energy theory (IET) [see equations (5), (24) and (26) in Ref.~\cite{ADA1}],

\begin{proposition}
\label{P2}
If the formed compounds are of non free-electronic systems, then the energy-level spacing of such a system is 
proportional to their constituent atomic energy-level spacing.
\end{proposition}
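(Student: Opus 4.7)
The plan is to start from the IET-Schr$\ddot{\rm o}$dinger equation (Eq.~(\ref{eq:IN14})) applied both to the compound and to each of its constituent atoms. For a compound $\cC$ formed from atoms $\{A_k\}$, I would write $H_{\rm IET}^{\cC}\Psi^{\cC} = (E_0^{\cC} \pm \xi^{\cC})\Psi^{\cC}$ and $H_{\rm IET}^{A_k}\psi^{A_k} = (E_0^{A_k} \pm \xi^{A_k})\psi^{A_k}$ for each $k$. The first step is to decompose $H_{\rm IET}^{\cC}$ as a sum over the constituent atomic IET Hamiltonians plus a residual interaction term capturing bonding and many-body effects, namely $H_{\rm IET}^{\cC} = \sum_k H_{\rm IET}^{A_k} + V_{\rm int}$. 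The hypothesis that the compound is of non free-electronic type enters twice here: it forces $\xi^{A_k} \neq 0$ for each constituent, and it guarantees that the electronic environment of each atom inside the compound is modified only in a bounded, well-controlled manner relative to the isolated-atom case, ruling out the delocalized regime.

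Next I would evaluate $\langle\Psi^{\cC}|H_{\rm IET}^{\cC}|\Psi^{\cC}\rangle$ using this decomposition. The sum of atomic Hamiltonians yields $\sum_k(E_0^{A_k}\pm\xi^{A_k})$ once the compound wave function is written (in the spirit of Eq.~(\ref{eq:IN4})) as an antisymmetrized product over atomic contributions with orthonormal atomic subspaces. The residual $V_{\rm int}$ is, by the standing assumption articulated at the end of the Introduction, a many-body contribution that is constant when we vary $\xi^{A_k}$ as the free degree of freedom. This gives a separation of the compound energy into a baseline piece plus a linear function of the atomic $\xi^{A_k}$, and comparing with $E_0^{\cC}\pm\xi^{\cC}$ I would identify the latter with the part that scales with the constituent spacings.

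To promote the linear relation to a genuine proportionality, I would pass to averages: dividing by the number of constituents and absorbing constant shifts into $E_0^{\cC}$, one reaches an averaged statement $\overline{\xi^{\cC}}\propto\overline{\xi^{A_k}}$. The renormalization structure supplied by Eqs.~(\ref{eq:IN17}) and~(\ref{eq:IN19}) then serves as a consistency check: because $\Lambda_{\rm IET}$ depends only on the dimensionless combination $\lambda\xi$, a uniform rescaling $\xi^{A_k}\to\alpha\xi^{A_k}$ across the constituents must induce the same uniform rescaling on $\xi^{\cC}$, which is precisely the desired proportionality, with the constant of proportionality determined entirely by the (assumed constant) bonding topology.

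The main obstacle I anticipate is justifying rigorously that $V_{\rm int}$ may be treated as independent of the $\xi^{A_k}$. In general, bonding rearranges electron densities and therefore shifts level spacings in ways that do depend on the atomic spacings, so the statement is not true without qualification. This is precisely where the ``non free-electronic'' hypothesis earns its keep: it excludes the metallic or delocalized regime in which $V_{\rm int}$ would scale nontrivially with $\xi$, and confines us to systems whose bonding framework preserves the atomic character of the levels, so that $V_{\rm int}$ contributes only an additive or multiplicative constant and never a nonlinear function of the $\xi^{A_k}$. Formalizing this hypothesis so that the constancy of $V_{\rm int}$ becomes a theorem rather than a physical heuristic is the delicate step; once it is in place, the remainder of the argument reduces to linearity of the IET-Schr$\ddot{\rm o}$dinger equation and bookkeeping.
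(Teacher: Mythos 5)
Your route is genuinely different from the paper's. You work from the IET-Schr$\ddot{\rm o}$dinger equation (Eq.~(\ref{eq:IN14})), decompose the compound Hamiltonian as $\sum_k H_{\rm IET}^{A_k}+V_{\rm int}$, and extract the proportionality from an expectation value taken in an antisymmetrized product state in the spirit of Eq.~(\ref{eq:IN4}). The paper's proof (which is delegated to \ref{P6} and \ref{P14}) never touches the Hamiltonian or any wave function at this point --- it explicitly advertises that the IET construction is wave-function-free. Instead it (i) normalizes the lowest unoccupied level to $E_{r\rightarrow\infty}=0$ so that each level spacing becomes minus an ionization energy (\ref{P14}, \ref{D15}); (ii) builds the compound quantity $E_{\rm I}^{\rm system}$ directly as a stoichiometrically weighted average of constituent ionization energies, averaging the first $\nu$ ionization energies for each cation but, by \ref{P16}, keeping only the first ionization energy for each anion (Eqs.~(\ref{eq:5})--(\ref{eq:13})); and (iii) argues $\xi_{\rm system}\propto E_{\rm I}^{\rm system}$ through linearity in the concentrations (Eqs.~(\ref{eq:14})--(\ref{eq:16})) and the preservation of orderings on examples such as (Li,Na,K,Rb)Cl. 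What your approach buys is a physical derivation of why a weighted sum of atomic spacings should appear at all; what the paper's buys is the specific, asymmetric cation/anion weighting of Eq.~(\ref{eq:1}), which a uniform decomposition over constituent atoms does not reproduce.

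Two further points. First, the step you flag as delicate --- the constancy of $V_{\rm int}$ with respect to the atomic $\xi^{A_k}$ --- is not closed by the paper either: it is precisely the assumption stated at the end of the Introduction that the many-body contribution of the crystal or molecular structure is a constant, so you are not behind the paper on this point, but you should not expect to find the missing theorem in \ref{P6}. Second, your identification of $\xi^{\cC}$ with ``the part that scales with the constituent spacings'' after absorbing constant shifts into $E_0^{\cC}$ is the one place your sketch genuinely underdetermines the answer: without the normalization $E_{\rm U}\rightarrow E_{r\rightarrow\infty}=0$ of \ref{D15}, the split of the total energy into $E_0\pm\xi$ is not canonical, and different choices of reference level change $\xi^{\cC}$ by more than an overall constant of proportionality. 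If you adopt that normalization and replace your uniform sum over atoms by the cation/anion-weighted sum of Eq.~(\ref{eq:13}), your argument lands on the paper's statement.
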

\begin{proof}
See \ref{P6} and \ref{P14}.
\end{proof}

Here, $\mathbb{R}^+$ is the set of positive real numbers including zero, and physically, \ref{P2} means that $\xi$ (also known as the ionization energy) of a newly formed compound is proportional to the energy-level spacing of its constituent chemical elements (atomic $\xi$). Prior to proving \ref{P1} and \ref{P2}, we need additional propositions and a series of definitions, which then can be used to construct the ``energy-level spacing'' physical categories, required to develop the proofs for \ref{P1} and \ref{P2}. For clarity, each definition will end with a $\blacksquare$, while the Halmos tombstone, $\square$ is used to denote the end of a proof. We now start with the logical construction of the ionization energy theory.

\begin{definition}
\label{D3}
The energy-level spacings (the quantized and discrete energy-level differences) for atoms, molecules and compounds are denoted by $\xi_{\rm atom}$ {\rm (}e.g., $\xi_{\rm H}${\rm )}, $\xi_{\rm molecule}$ {\rm (}e.g., $\xi_{\rm H_2}${\rm )} and $\xi_{\rm compound}$ {\rm (}e.g., $\xi_{\rm YBa_2Cu_3O_7}${\rm )}, respectively $\blacksquare$
\end{definition}

\begin{proposition}
\label{P4}
$(a)$ $\xi_{\rm atom}$ is unique for any atom. $(b)$ $\xi_{\rm ion}$ is unique for any isolated ion with at least one bounded electron.
\end{proposition}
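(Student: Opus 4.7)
The plan is to prove uniqueness in both parts by exploiting the characterization of an isolated atomic or ionic system by the triple $(\zeta,\eta,\tau)$ and the unique function $f(\zeta,\eta,\tau)$ referenced in \texttt{NOTE} 1. Rather than computing any specific $\xi$, I would establish the uniqueness of the map from the physical identity of the atom (or ion) to its energy-level spacing by a short contrapositive chain: identity of atom $\Rightarrow$ identity of $(\zeta,\eta,\tau)$ $\Rightarrow$ identity of the Hamiltonian $H_{\rm IET}$ $\Rightarrow$ identity of its spectrum $\Rightarrow$ identity of $\xi$.

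For part (a), I would first fix what "atom" means in the sense of \ref{D3}: a neutral isolated system with $\zeta=\tau=Z$, specified up to isotopic variation by $\eta$. Suppose, for contradiction, two atoms yield the same $\xi_{\rm atom}$. Unpacking $V_{\rm IET}$ from Eq.~(\ref{eq:IN16}), the electron–nucleus term depends on $Z=\tau$, the electron–electron term on $\zeta$, and the kinetic-energy operator, through the reduced mass, on the nucleon count $\eta+\tau$. Hence distinct triples $(\zeta,\eta,\tau)$ produce distinct Hamiltonians, and by \texttt{NOTE} 1 the function $f$ assigning the Hamiltonian to $(\zeta,\eta,\tau)$ is unique. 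The spectrum of a self-adjoint $H_{\rm IET}$ is determined by $H_{\rm IET}$, so the discrete level differences $\xi$ are determined by $(\zeta,\eta,\tau)$. The assumed equality of $\xi$ values therefore forces the triples to coincide, contradicting the assumption that the two atoms were distinct.

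For part (b), the argument is structurally identical, with $(\zeta,\eta,\tau)$ now permitted to satisfy $\zeta\neq\tau$. The hypothesis that the ion has at least one bounded electron ($\zeta\geq 1$ together with a sufficiently attractive core, $\tau>0$) is precisely what guarantees that $H_{\rm IET}$ has a nonempty discrete spectrum, so that $\xi_{\rm ion}$ is a well-defined object at all. Once this is in place, the same injectivity chain through $f$ yields uniqueness.

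The main obstacle I anticipate is the step from "the Hamiltonian is uniquely determined by $(\zeta,\eta,\tau)$" to "its spectrum, and in particular its level spacings, are uniquely determined by $(\zeta,\eta,\tau)$," i.e. ruling out accidental isospectrality between two physically distinct isolated atoms or ions. I do not plan to re-derive this from scratch; instead I would lean on the claimed uniqueness of $f$ asserted in \texttt{NOTE} 1 together with the standard fact that $\mathrm{spec}(H)$ is an invariant of $H$. A secondary subtlety is isotopes: two atoms with the same $Z$ but different $\eta$ must produce different $\xi$, which is where the reduced-mass dependence in the kinetic term must be invoked explicitly so that the argument does not collapse the isotopic degree of freedom.
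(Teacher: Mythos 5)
Your proposal rests on the same pivot as the paper's own proof --- the uniqueness of the triple $(\zeta,\eta,\tau)$ transmitted through the function $f(\zeta,\eta,\tau)$ to $\xi$ --- but it packages that idea differently, and the difference is worth noting. The paper argues by example: it writes $\xi_{\rm H} = f_{\rm H}(\zeta_{=1},\eta_{=0},\tau_{=1})$ and $\xi_{\rm He} = f_{\rm He}(\zeta_{=2},\eta_{=2},\tau_{=2})$, asserts that the unique particle numbers make the physical interactions (hence $f$, hence $\xi$) unique, and for part $(b)$ simply states that checking $f_{\rm He^+}$ is sufficient. What it establishes is really well-definedness: each atom or ion determines a single $\xi$. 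You instead run a general injectivity argument by contradiction, unpacking $V_{\rm IET}$ from Eq.~(\ref{eq:IN16}) to show that each slot of the triple enters the Hamiltonian through a distinct physical channel (electron--nucleus term through $\tau$, electron--electron term through $\zeta$, kinetic term through the reduced mass, i.e.\ $\eta+\tau$). That buys you two things the paper does not supply: an explicit reason why isotopes with the same $Z$ but different $\eta$ cannot be conflated, and an honest identification of the one step that neither you nor the paper actually closes --- ruling out accidental isospectrality between distinct Hamiltonians. Be aware, though, that your stated fallback for that step is circular: \texttt{NOTE} 1 is a summary assertion of precisely the claim ``uniqueness of $f$ implies uniqueness of the energy levels and spacings,'' so leaning on it does not discharge the obstacle, it merely restates the proposition. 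Since the paper's own proof silently assumes the same implication (``unique numbers imply unique interactions imply unique $\xi$''), your argument is no weaker than the original; it is the same argument made general and with its assumption made visible. Also note the direction of inference you actually need downstream (\ref{D8}, \ref{P5}) is the well-definedness one, which both you and the paper do obtain; the injectivity framing is a strictly stronger reading of ``unique'' than the paper's proof supports.
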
 

\begin{proof} 
$(a)$ For an atomic hydrogen, one has $\xi_{\rm H} = f_{\rm H}(\zeta_{=\rm 1},\eta_{=\rm 0},\tau_{=\rm 1}) = f_{\rm H}$, which is unique because $f_{\rm H}(\zeta_{=\rm 1},\eta_{=\rm 0},\tau_{=\rm 1})$ is unique due to the unique numbers of electrons, neutrons and protons in each atom, and these unique numbers imply that the physical interactions among these particles are also unique. For a helium atom $\xi_{\rm He} = f_{\rm He}(\zeta_{=\rm 2},\eta_{=\rm 2},\tau_{=\rm 2}) = f_{\rm He}$. Again, the uniqueness of $f_{\rm He}$ is guaranteed by the unique numbers of $\zeta, \eta$ and $\tau$. $(b)$ It is straightforward to check that $f_{\rm He^+}$ is unique, and this is sufficient.
\end{proof}

We will provide the details on $f(\zeta,\eta,\tau)$ in \ref{P10}. This is to make sure the construction of IET is systematic and logical in its presentation. Anyway, what is unambiguously certain here and now is that each $f(\zeta,\eta,\tau)$ is uniquely defined satisfying the unique numbers of protons and neutrons confined in the nucleus (\textit{via} the so-called strong force), and the number of bounded electrons (due to the Coulomb force) revolving around the nucleus. Therefore, the uniqueness of $\xi$ is due to the unique function, $f(\zeta,\eta,\tau)$. Alternatively, the uniqueness of $\xi_{\rm atom}$ and $\xi_{\rm ion}$ are physically self-evident from the periodic table of chemical elements. However, both the position and the momentum of any bounded electron, in any quantum matter, can never be determined simultaneously. This is famously known as the Heisenberg uncertainty principle [see equation (3.63) in Ref.~\cite{DJG1}]. 

Here, we introduce some useful scientific terminologies--- neutral chemical elements (from the periodic table of chemical elements) are also known as atoms, isolated compounds of several atoms are called molecules, while the charged atoms are denoted by cations (positively charged ions), or anions (negatively charged ions), or simply ions in general. Molecular ions are simply charged molecules (positively or negatively charged). Plasmas on the other hand, consist of charged gas particles. In other words, plasmas can consist of isolated charged-molecules and/or -atoms and/or -nanoparticles (also known as dust particles) and/or any combination of them. Here, nanoparticles have atoms of the order of much less than 10$^{23}$, which cannot be considered as molecules due to some subtle reasons that are irrelevant here. By definition, a proton is a positively ($+e$) charged sub-atomic particle, while an electron is a negatively ($-e$) charged sub-atomic particle. A neutron on the other hand, is a neutral sub-atomic particle. 

However, the different charge property alone (namely, positive, negative and neutral) among these sub-atomic particles does not give a complete physical description for these sub-atomic particles. For example, an electron is a fundamental particle without any constituent sub-particle, whereas, a proton and a neutron have their own distinct constituent sub-particles. More detailed scientific terminologies are not required to understand what is written here. 

\begin{proposition}
\label{P5}
The energy-level spacings defined in $\ref{D3}$ are unique for each different system, be it for atoms, molecules, or solids.
\end{proposition}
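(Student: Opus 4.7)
The plan is to extend the uniqueness argument of \ref{P4} from atoms and ions to molecules, compounds, and solids by generalizing the defining function $f(\zeta,\eta,\tau)$ to aggregate systems. First I would observe that any molecule, compound, or solid is assembled from a definite collection of constituent atoms (or ions), each of which has uniquely determined numbers $\zeta_k, \eta_k, \tau_k$ by \ref{P4}. Aggregating over the constituents yields total counts $\zeta = \sum_k \zeta_k$, $\eta = \sum_k \eta_k$, $\tau = \sum_k \tau_k$ for the whole system. The candidate function $f_{\rm system}(\zeta,\eta,\tau)$ is then built from the totals together with the fixed set of physical interactions (Coulomb, exchange, and nuclear strong force) acting among these particles.

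Next I would argue that this $f_{\rm system}$ inherits its uniqueness directly from \ref{P4}. The key point is that the interaction terms among electrons, protons and neutrons are themselves functions only of how many of each particle are present and of their (fixed) spatial arrangement; they introduce no free parameters. Since by hypothesis of the proposition the system is a specified one (a particular molecule, a particular compound, a particular solid lattice, rather than an unspecified arrangement of the same raw counts), $f_{\rm system}(\zeta,\eta,\tau)$ is a single, determinate expression. Hence $\xi_{\rm molecule}$ and $\xi_{\rm compound}$, which are the energy-level differences extracted from the spectrum associated with $f_{\rm system}$, are uniquely defined for that system.

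Finally I would close the argument by contradiction: suppose two different quantum systems (say two molecules or two solids) yielded the same $\xi$ while differing in at least one of $\zeta, \eta, \tau$ or in their internal interaction pattern. Then the corresponding $f_{\rm system}$ and $f_{\rm system}'$ would coincide as functions, contradicting the uniqueness already established at the atomic level in \ref{P4} (since the aggregate decomposes into atomic contributions whose $f$-values are pairwise distinct across species). Therefore distinct systems must produce distinct $\xi$.

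The main obstacle I expect is handling structural (isomeric or polymorphic) distinctions, where two molecules or solids share identical totals $\zeta,\eta,\tau$ but differ in geometry and hence in spectrum. I would address this by noting, as the paper already signals in its discussion of many-body contributions from crystal and molecular structure, that $f_{\rm system}$ must be understood to depend on the structural configuration as an implicit parameter rather than on the raw counts alone; under this reading, \emph{different} systems in the sense of the proposition means different structural configurations, and the uniqueness statement is restored. The technical step that most deserves care is the verification that no nontrivial coincidence $\xi_{\rm system} = \xi_{\rm system'}$ can occur between genuinely distinct configurations, which ultimately rests on the injectivity of the map from $(\zeta,\eta,\tau,\text{configuration})$ to the generating function $f$.
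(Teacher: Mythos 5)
Your overall strategy---decompose the system into its constituent atoms and ions and let the uniqueness of each constituent's $f(\zeta,\eta,\tau)$ from \ref{P4} carry over to the aggregate---is the same reduction the paper uses, but two steps go wrong. First, the object the paper actually forms is not a single $f_{\rm system}$ evaluated at the summed totals $\zeta=\sum_k\zeta_k$, $\eta=\sum_k\eta_k$, $\tau=\sum_k\tau_k$; it is the concentration-weighted sum of the constituents' ionic $f$-values as in \ref{P6}$(a)$, e.g.\ $E_{\rm I}^{\rm SiO_2}=x(1/\nu)\sum_{\nu}f_{{\rm Si}^{\nu+}}+yf_{{\rm O}^{+}}$, and the uniqueness of $E_{\rm I}^{\rm system}$ is then just the determinacy of a sum whose every term is unique by \ref{P4}. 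The paper explicitly warns, in the ${\rm Si}_{x_i}{\rm C}_{1-x_i}$ discussion, that the nucleons do \emph{not} merge into one nucleus of nucleon number $2[14x_i+6(1-x_i)]$, so a single $f$ of the raw totals is not the right carrier of uniqueness; the decomposition, not the aggregation, does the work.

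Second, and more seriously, your closing proof by contradiction asserts injectivity: that two genuinely distinct systems cannot share the same $\xi$. That is strictly stronger than what \ref{P5} claims or what the paper proves, and your justification---that the aggregate decomposes into atomic contributions whose $f$-values are pairwise distinct---does not deliver it, since two different weighted sums of pairwise-distinct real numbers can perfectly well coincide numerically. The paper itself concedes exactly this, exhibiting the ordering of $E_{\rm I}^{{\rm Si}_{x_i}{\rm C}_{1-x_i}}$ against that of $\xi_{\rm (Li,Na,K,Rb)Cl}$ as one of the ``lucky coincidences proven to exist.'' In \ref{P5}, ``unique'' means only that each specified system determines one well-defined value of $\xi$ (well-definedness of the map from systems to spacings), not that this map is injective. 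If you drop the contradiction step and the injectivity claim, what remains---a determinate weighted sum of individually unique atomic and ionic terms, with structural dependence absorbed as in the paper's later $\textbf{k}$-dependent correspondence rule---is essentially the paper's argument.
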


\begin{proof} 
We first assume \ref{P6}$(a)$ is true, and defer the proof for \ref{P6}$(a)$ after \ref{D15}. The uniqueness of each atom and each isolated ion with at least one bounded electron have been proven earlier (see \ref{P4}).

\begin{proposition}
\label{P6}
$(a)$ For any general quantum matter or system $($excluding atoms and isolated ions$)$, 
\begin {eqnarray}
\xi_{\rm system} \propto E^{\rm system}_{\rm I} &=& \bigg[\sum_ql_q\sum_{\nu}\frac{1}{\nu}f_{q,\nu}(\zeta,\eta,\tau)\bigg]_{\rm cation} + \nonumber \\&& \bigg[\sum_ql_qf_{q,\nu = 1}(\zeta,\eta,\tau)\bigg]_{\rm anion}, \label{eq:1}
\end {eqnarray}  
where $q$ and $l_q$ represent the types of chemical elements contained in a non-atomic system, and the composition of each chemical element in a particular non-atomic system, respectively. Here, $\nu$ is the number of outer electrons {\rm (}also known as valence electrons{\rm )} that are responsible for any chemical bonding and for any physical interaction in any system such that $\nu \in \mathbb{N}^*$ and $\nu \leq \zeta$.

$(b)$ For atoms and isolated ions, $Eq.~(\ref{eq:1})$ is exact such that $\xi_{\rm atom} = E^{\rm atom}_{\rm I} = f_{\rm atom}(\zeta, \eta, \tau)$ and $\xi_{\rm ion} = E^{\rm ion}_{\rm I} = f_{\rm ion}(\zeta_{\rm ion}, \eta, \tau)$, respectively, where $\tau < \zeta_{\rm ion}$ for any isolated anions, while $\tau > \zeta_{\rm ion}$ for any cations.
\end{proposition}

The above-stated proportionality ($\propto$) actually defines the ionization energy approximation. The complete proof for the ionization energy approximation is given after \ref{D15}. For example, the three proposals in \ref{P6}, \ref{P14} and \ref{P16} give the complete proofs for the ionization energy approximation. One should also note here that the cations are electron-donors, while the anions are electron-acceptors.

\begin{definition}
\label{D7}
All real energy-level spacings are always denoted by $\xi$. The unreal ones are denoted by $E^{\rm system}_{\rm I}$ such that the term ``unreal'' means $\xi_{\rm system} \neq E^{\rm system}_{\rm I}$, but implies $\xi_{\rm system} \propto E^{\rm system}_{\rm I}$ $($from \ref{P6}$)$ $\blacksquare$
\end{definition}

It is sufficient for us to prove \ref{P5} by considering a solid and a molecule. This proof can be easily verified for other systems. The proposition \ref{P4} given earlier is sufficient for any atomic system and isolated ions. For a particular solid, say Si$^{4+}$O$^{2-}_2$, one can write (from \ref{P6} and \ref{D7}) 
\begin {eqnarray}
&& \xi_{\rm SiO_2} \propto E^{\rm SiO_2}_{\rm I} = x(1/4)\big[f_{\rm Si^{+}}(\zeta_{=\rm 13},\eta_{=\rm 14},\tau_{=\rm 14}) + f_{\rm Si^{2+}}(\zeta_{=\rm 12},\eta_{=\rm 14},\tau_{=\rm 14}) + \nonumber \\&& f_{\rm Si^{3+}}(\zeta_{=\rm 11},\eta_{=\rm 14},\tau_{=\rm 14}) + f_{\rm Si^{4+}}(\zeta_{=\rm 10},\eta_{=\rm 14},\tau_{=\rm 14})\big] + yf_{\rm O^+}(\zeta_{=\rm 7},\eta_{=\rm 8},\tau_{=\rm 8}). \label{eq:2}
\end {eqnarray}
Since the numbers of electrons, neutrons and protons are unique for each isolated cation and anion, one can simplify the notations to write, $E_{\rm I}^{\rm SiO_2} = x(1/\nu)\sum_{\nu}f_{\rm Si^{\nu +}} + yf_{\rm O^+} = x(1/\nu)\sum_{\nu}\xi_{\rm Si^{\nu +}} + y\xi_{\rm O^+}$. Here, $x = 1$ and $y = 2$ denote the ratio of Si and O atoms in SiO$_2$ and $\{x,y\} \in \mathbb{R}_*^+$ where the set of positive numbers excluding zero is denoted by $\mathbb{R}_*^+$. Knowing that $x(1/\nu)\sum_{\nu}f_{\rm Si^{\nu +}}$ and $yf_{\rm O^+}$ are individually unique (from \ref{P4}), we can conclude that $E_{\rm I}^{\rm SiO_2}$ is also unique. For a H$^+_2$O$^{2-}$ molecule, we have $\xi_{\rm H_2O} \propto E_{\rm I}^{\rm H_2O} = if_{\rm H^+}(\zeta_{=\rm 0},\eta_{=\rm 0},\tau_{=\rm 1}) + jf_{\rm O^+}(\zeta_{=\rm 7},\eta_{=\rm 8},\tau_{=\rm 8}) = if_{\rm H^+} + jf_{\rm O^+} = 2\xi_{\rm H^+} + 1\xi_{\rm O^+}$ where $\{i,j\} \in \mathbb{N}^*$. Therefore, $E_{\rm I}^{\rm H_2O}$ has to be unique from the uniqueness of $if_{\rm H^+}$ and $jf_{\rm O^+}$.
\end{proof}
 
\begin{definition}
\label{D8}
From \ref{P4}, each atom forms a singleton due to the uniqueness of $\xi_{\rm atom}$ $\blacksquare$
\end{definition}
 
For example, $\{\rm H\}$, $\{\rm He\}$, $\{\rm Be\}$, $\{\chi_i\}$, $\cdots$, $\{\chi_j\}$ such that $\{\chi\}$ is called singleton $\chi$, and represents any stable or unstable (with relatively very short lifetimes) chemical element where $\{i,j\} \in \mathbb{N}^*$, $i < j$ and $j$ is finite.

\begin{definition}
\label{D9}
Physically, any isolated ion without any bounded electron cannot be part of any system consisting of more than one chemical element. In other words, no system with more than one chemical element can exist without any bounded electron due to repulsive Coulomb interaction between two positively charged nucleus $\blacksquare$
\end{definition}

However, it is to be noted here that any interaction between two nuclei may give rise to a new chemical element, and it involves nuclear reactions [see \ref{P21}, after Eq.~(\ref{eq:22})], which will be covered when we discuss relations between singletons. Moreover, the cations and anions considered in \ref{P6} is for systems with electrons and these systems are strictly neutral ($\zeta = \tau$). The systems (SiO$_2$ and H$_2$O) stated in \ref{P5} are neutral because the electrons contributed by the respective cations are still bounded within that particular system such that there is a non-zero interaction between the cation and that particular electron.

\begin{proposition}
\label{P10}
Following $\ref{D9}$, $\xi$ is zero for any isolated cation in the absence of electrons such that there is a zero interaction between the isolated cation and any electron, and therefore, $f_{\rm cation}(\zeta_{=\rm 0},\eta_{\geq\rm 0},\tau_{>\rm 0}) = 0$.
\end{proposition}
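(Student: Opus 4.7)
The plan is to reduce the claim to a direct consequence of \ref{P6}$(b)$ together with the physical meaning of $\xi$ established in \ref{D3} and the uniqueness results of \ref{P4} and \ref{P5}. By \ref{P6}$(b)$, for any ion $\xi_{\rm ion} = E^{\rm ion}_{\rm I} = f_{\rm ion}(\zeta_{\rm ion},\eta,\tau)$, so it is enough to show that $\xi = 0$ whenever $\zeta = 0$, $\eta \geq 0$, $\tau > 0$; the vanishing of $f_{\rm cation}$ then follows from the defining identity.

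To establish $\xi = 0$ in this situation, I would unpack what the symbol $\xi$ means via \ref{D3}: it is a quantized energy-level spacing, i.e., a difference between bound electronic energy levels. Bound electronic levels in a Coulombic system are populated only when at least one electron is present in the field of the nucleus (the levels are labelled by electron quantum numbers and filled according to the Pauli principle, as discussed around Eqs.~(\ref{eq:IN3})--(\ref{eq:IN5})). If $\zeta = 0$, then there is no bound electron, and in particular no occupied pair of levels whose difference could define a nonzero $\xi$. Equivalently, the ionization energy — by its operational definition as the energy required to detach a bound electron — is identically zero when there is nothing to detach.

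The remaining step is to connect this absence of bound electrons to the \emph{functional} $f_{\rm cation}$ rather than just to $\xi$ itself. Here I would invoke \ref{D9}: an isolated cation with $\zeta = 0$ cannot interact attractively with any electron via a bound state (there is no confining Coulomb well to trap one without contradicting the premise of the fully stripped cation in isolation), so the Coulomb coupling between the bare nucleus and ``any electron'' that enters into the construction of $f_{\rm cation}$ is zero. Since $f(\zeta,\eta,\tau)$ is built precisely from the interactions among the $\zeta$ electrons, $\tau$ protons, and $\eta$ neutrons (the discussion following \ref{P4}), setting $\zeta = 0$ kills every electronic contribution and leaves only purely nuclear quantities, which do not enter $\xi$ by \ref{D3}. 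Hence $f_{\rm cation}(\zeta_{=0},\eta_{\geq 0},\tau_{>0}) = 0$ as required.

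The main subtlety — what I would call the only real obstacle — is rhetorical rather than technical: one must be careful to justify that $\xi$ in this framework refers exclusively to \emph{electronic} energy-level spacings (ionization energies), and not to any internal nuclear excitations of a bare cation, which \emph{do} exist but lie outside the scope of \ref{D3}. I would address this explicitly by appealing to the explicit identification $\xi = E_{\rm I}$ for atoms and ions in \ref{P6}$(b)$, together with the discussion preceding \ref{P4} that identifies $f(\zeta,\eta,\tau)$ with the interactions binding electrons to the nucleus; everything else then follows immediately.
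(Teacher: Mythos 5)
Your argument is correct and is essentially the paper's own: the paper's first proof of \ref{P10} likewise reduces to the observation that $f_{\rm cation}(\zeta_{=\rm 0},\eta_{\geq\rm 0},\tau_{>\rm 0})=0$ \emph{because} there is no bounded electron, hence no occupied level and no spacing to speak of, and your explicit remark that nuclear excitations lie outside the scope of $\xi$ matches the paper's later separation of $\mathscr{P}_{\rm nucleus}$ from the energy-level spacing categories. The only cosmetic difference is that the paper supplements this with a Schr\"odinger-equation variant (Eqs.~(\ref{eq:3})--(\ref{eq:4})) in which $\varphi_0^{\rm H^+}=0$ forces every term to vanish, whereas you phrase the same vacuity through \ref{D3} and the operational definition of the ionization energy, citing \ref{P6}$(b)$ only to identify $\xi$ with $f(\zeta,\eta,\tau)$, an identification the paper already supplies in \ref{P4}.
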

 
\begin{proof} For an isolated and electronless cation (H$^+$), the function $f_{\rm H^{+}}(\zeta_{= 0},\eta_{=\rm 0},\tau_{=\rm 1}) = 0$ due to zero bounded electron, and therefore, $\xi_{\rm H^{+}} = 0$. For He$^{2+}$, one has $f_{\rm He^{2+}}(\zeta_{=\rm 0},\eta_{=\rm 2},\tau_{=\rm 2}) = 0$ and consequently $\xi_{\rm He^{2+}} = 0$, and one can go on and prove $f_{\rm cation}(\zeta_{=\rm 0},\eta_{\geq\rm 0},\tau_{>\rm 0}) = 0$ for all the chemical elements (stable or unstable) in the periodic table, if and only if the interaction between a cation and any electron is always zero.
\end{proof}

\begin{proof} This is an alternative proof with embedded quantum physical notions, and therefore, can be regarded as an exposition in quantum physics. To see why $f_{\rm cation}(\zeta_{=\rm 0},\eta_{\geq\rm 0},\tau_{>\rm 0}) = 0$, we just need to prove that the total electronic energy in a particular electronless cation is zero. It is sufficient to consider a single atomic H and its only cation H$^+$. The total electronic energy for the one bounded electron in an atomic H is the sum of the electronic kinetic energy, and the electron-nucleus attractive Coulomb potential energy. The total electronic energy can be calculated from the non-relativistic and time-independent quantum mechanical formalism (Schr$\ddot{\rm o}$dinger representation),        
\begin{eqnarray}
&&\bigg[-\frac{\hbar^2}{2m}\nabla^2 + V \bigg]_{\rm H}\varphi^{\rm H}_1 = \big[E_0 + f(\zeta_{=\rm 1},\eta_{= \rm 0},\tau_{= \rm 1})\big]_{\rm H}\varphi^{\rm H}_1 = \big[E_0 + \xi\big]_{\rm H}\varphi^{\rm H}_1. \label{eq:3}
\end{eqnarray}
Here, $(\hbar^2/2m)\nabla^2$, $V \propto [(+e)(-e)/r_1] + [(+e)(-e)/r_2]$ and $E_0 + \xi$ are the kinetic energy operator, Coulombic potential energy operator and the total energy eigenvalue, respectively, for the bounded electron. The charges, $+e$ and $-e$ refer to the charge of a proton and an electron, respectively, while $r_{1,2}$ is the usual electron coordinate, relative to protons 1 and 2, respectively. Moreover, $\varphi^{\rm H}_1$ represents the wave function for the bounded electron, $E_0$ is defined to be the total energy in the absence of all types of perturbations, including temperature, $T$ (that is $T$ = 0 Kelvin). The mass ($m$) of the bounded electron is much smaller compared to the mass of the nucleus ($M$) such that $M \gg m$, and thus, we can assume the nucleus is static relative to the bounded electron. This assumption is known as the Born-Oppenheimer approximation [see equation (49) in Ref.~\cite{ADA2}], and it is invoked here entirely for technical convenience because we are concerned only with the existence of $\xi$ due to boundedness. Thus, any correction to $\xi$ due to vibrating nucleus or other subtle interactions can always be assumed to be taken into account (see the logical proof below \ref{P10}). As a consequence, Eq.~(\ref{eq:3}) is independent of the kinetic energy of the nucleus. The total energy, $E_0 + \xi$ is the energy needed to remove the bounded electron to infinity or to any finite distance (if any) such that there is no interaction between H$^{+}$ and the removed electron. In the absence of any bounded electron, Eq.~(\ref{eq:3}) reads  
\begin{eqnarray}
\bigg[-\frac{\hbar^2}{2m}\nabla^2 + V \bigg]_{\rm H^+}\varphi^{\rm H^+}_0 &=& \big[E_0 + f(\zeta_{=\rm 0},\eta_{= \rm 0},\tau_{= \rm 1})\big]_{\rm H^+}\varphi^{\rm H^+}_0 \nonumber \\&=& \big[E_0 + \xi \big]_{\rm H^+}\varphi^{\rm H^+}_0 = 0, \label{eq:4}
\end{eqnarray}
where $E_0 = 0 = f_{\rm H^+}(\zeta_{=\rm 0},\eta_{=\rm 0},\tau_{=\rm 1})$ because there is no bounded electron ($\zeta = 0$), and this is guaranteed by the fact that $[(\hbar^2/2m)\nabla^2]\varphi_0^{\rm H^+} = 0$ and $V\varphi_0^{\rm H^+} = 0$ because in the absence of any bounded electron, $E_0^{\rm H^+} = 0 = \xi^{\rm H^+} \wedge \varphi_0^{\rm H^+} = 0$ (implies $\varphi_0^{\rm H^+}$ and $\xi^{\rm H^+}$ cannot exist if there is not a single bounded electron).
\end{proof}

\begin{definition}
\label{D9(b)}
Proposition \ref{P10} implies that there exists a correspondence rule such that $\xi = f(\zeta,\eta,\tau) \rightarrow \varphi$ or $\varphi \rightarrow f(\zeta,\eta,\tau) = \xi$ $\blacksquare$
\end{definition}

In physics, $\varphi^{\rm H}_1$ is known as the electronic wave function, and the ``real'' wave function is never known for any real atoms or real systems or real ions with $\zeta \geq 1, \eta > 0, \tau \geq 1$, except for atomic hydrogen ($\zeta = \tau = 1, \eta = 0$). The wave functions for any real system (including for any real atom or ion, other than hydrogen) are generated by some educated guesses [see Statement 2 and Remark 5 in Ref.~\cite{ADA2}] such that the wave functions give the minimum total energy. The energy-minimization procedure follows the Rayleigh-Ritz variational principle [see equations (7.32) and (7.33) in Ref.~\cite{DJG1}]. By now, some of you may have guessed correctly why IET belongs to the energy-level spacing categories, which is due to the fact that IET never requires any knowledge on wave functions (\ref{D9(b)}). Instead, it relies on the uniqueness of the electronic energy-level spacings in the presence of a proton (atomic hydrogen), or a nucleus (atoms or ions) or nuclei (molecules or compounds). Here, \ref{P10} is also true within the relativistic and time-dependent quantum mechanics because these additional effects do not disturb the uniqueness of $\xi$ in any way. Moreover, we have excluded the concept of positively charged electron-like particles (known as holes) for simplicity. To incorporate the concept of holes into the proof, we just need to rewrite the total energy eigenvalue as $E_0 - \xi$. By definition, $E_0$ is a negative number (due to bounded electrons) ranges between 0 and $-\infty$. The minus sign in $- \xi$ implies $(E_0 - \xi) \rightarrow -\infty$ for holes (for free or unbounded holes, $(E_0 - \xi) = -\infty$), while for electrons, $(E_0 + \xi) \rightarrow 0$ (for free or unbounded electrons, $(E_0 + \xi) = 0$). The holes as positive charge carriers exist in doped semiconductors and in some free-electron metals. Precise definitions and proofs on the existence of holes in semiconductors are given elsewhere [see the conditions \texttt{A1} and \texttt{B2}, and the section, mathematical analysis in Ref.~\cite{ADA3}].

\begin{proposition}
\label{P11}
$(a)$ Each anion forms a singleton due to the uniqueness of $f_{\rm anion}(\zeta,\eta,\tau)$ such that $f_{\rm anion}(\zeta,\eta,\tau) = \xi_{\rm anion}$. $(b)$ Physically, a single-electron anion cannot exist.
\end{proposition}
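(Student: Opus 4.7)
The plan for part (a) is to mirror the singleton-construction used in \ref{D8} for atoms, now applied to anions. From \ref{P4}(b), $\xi_{\rm ion}$ is already known to be unique for any isolated ion carrying at least one bounded electron, and from \ref{P6}(b) this uniqueness is tied to the explicit functional form $\xi_{\rm anion} = E^{\rm anion}_{\rm I} = f_{\rm anion}(\zeta_{\rm ion},\eta,\tau)$ subject to the anion-distinguishing constraint $\tau<\zeta_{\rm ion}$. I would then argue: each particular anion corresponds to a unique triple $(\zeta,\eta,\tau)$, and since the physical interactions among the bounded electrons, protons, and neutrons are themselves fixed uniquely by this triple, $f_{\rm anion}$ evaluates to a unique real number. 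Therefore the set whose sole member is that anion (identified through its unique $\xi_{\rm anion}$) is a singleton, just as for neutral atoms. I would illustrate with explicit cases such as $\{\rm H^{-}\}$ with $f_{\rm H^{-}}(\zeta_{=2},\eta_{=0},\tau_{=1})$, $\{\rm O^{2-}\}$ with $f_{\rm O^{2-}}(\zeta_{=10},\eta_{=8},\tau_{=8})$, and $\{\rm Cl^{-}\}$ with the corresponding triple, to make the parallel with \ref{D8} concrete.

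For part (b), the plan is a two-step impossibility argument. First, I would use the anion constraint of \ref{P6}(b), namely $\tau<\zeta_{\rm ion}$, and combine it with the hypothesis $\zeta_{\rm ion}=1$; this forces $\tau=0$, i.e.\ no protons. Second, I would invoke the physical restriction already in play in \ref{D9} and \ref{P10}: no protons means no nucleus, so the electron--nucleus attractive Coulomb potential in the Schr\"odinger equation of \ref{P10} vanishes, $V=0$. The equation then admits only free-particle plane-wave solutions and supports no discrete bound states, so no electron can be \emph{bounded}. Consequently the putative object reduces to a free electron rather than an ion of a chemical element, contradicting the definition of an anion. Thus no single-electron anion can physically exist, matching the empirical fact that the smallest real anion is H$^{-}$ with two bounded electrons.

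The hardest part is part (b), and the obstacle is mainly interpretive rather than computational: one must carefully fix what ``single-electron anion'' means so that the argument cannot be sidestepped. I would want to rule out any reading in which a bare free electron or an electron hovering near an electronless cation could qualify as an anion; the clean way is to insist, in parallel with \ref{D9} and the setup preceding \ref{P10}, that an ion of a chemical element requires $\tau\geq 1$, so that $\zeta=1$ together with $\tau<\zeta$ is a strict contradiction. Part (a) by contrast is essentially a corollary of \ref{P4}(b) and \ref{P6}(b), and the writeup there is devoted mostly to spelling out the singleton construction in exact analogy with \ref{D8}, after which \ref{P11}(a) follows without further work.
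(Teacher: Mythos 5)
Your proposal is correct and takes essentially the same route as the paper: part (a) is the same singleton construction carried over from \ref{P4} and \ref{D8} to anions, and part (b) rests on the same key observation the paper uses, namely that boundedness requires at least one proton as a nucleus, so a one-electron ``anion'' could only be a free electron. Your write-up of (b) merely makes explicit, via the constraint $\tau<\zeta$ and the vanishing of the Coulomb potential, what the paper asserts in a single sentence.
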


\begin{proof} $(a)$ Following \ref{P4} and \ref{D8}, we can construct these singletons for any anion such that $\{\chi_{i}^{\alpha-}\}, \cdots, \{\chi_{i}^{\alpha'-}\}, \cdots, \{\chi_{j}^{\alpha-}\} \cdots, \{\chi_{j}^{\alpha'-}\}$ where $\alpha < \alpha'$ and $\{i,j,\alpha,\alpha'\} \in \mathbb{N}^*$. If $\alpha = \alpha' = 0$, then $\{\chi_i\}, \cdots, \{\chi_j\}$ denote the singletons that have the usual neutral atoms or chemical elements as elements, which have been defined earlier in \ref{D8}. $(b)$ Recall that the boundedness condition requires at least a proton (positively charged by definition) as a nucleus and a bounded electron (negatively charged by definition). Therefore, indeed a single electron anion cannot exist.
\end{proof}

\begin{proposition}
\label{P12}
Each cation forms a singleton due to the uniqueness of $f_{\rm cation}(\zeta,\eta,\tau)$ such that $f_{\rm cation}(\zeta,\eta,\tau) = \xi_{\rm cation}$ and $\zeta \geq 0$.
\end{proposition}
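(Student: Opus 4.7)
The plan is to mirror the structure of the proof of \ref{P11}$(a)$, but with the added subtlety that the proposition explicitly allows $\zeta = 0$, whereas for anions we always have $\zeta \geq 1$ (a single-electron anion already being the minimal case by \ref{P11}$(b)$, and any anion requires at least one bound electron in excess of the nuclear charge). So I would split on whether the cation carries at least one bounded electron.

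First, in the case $\zeta \geq 1$, the cation is an isolated ion with at least one bounded electron, and \ref{P4}$(b)$ directly delivers the uniqueness of $f_{\rm cation}(\zeta,\eta,\tau) = \xi_{\rm cation}$. Combining this with \ref{D8} and the construction already used in \ref{P11}$(a)$, I would simply enumerate the singletons $\{\chi_i^{\alpha+}\},\ldots,\{\chi_i^{\alpha'+}\},\ldots,\{\chi_j^{\alpha+}\},\ldots,\{\chi_j^{\alpha'+}\}$ with $\alpha<\alpha'$ and $\{i,j,\alpha,\alpha'\}\in\mathbb{N}^*$, $\alpha\leq\tau-1$ (so that at least one electron remains bound). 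Each such singleton is distinct because its underlying $(\zeta,\eta,\tau)$ triple is distinct, and the uniqueness of the function evaluated at that triple forces distinct $\xi$-values.

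In the case $\zeta = 0$, \ref{P10} gives $f_{\rm cation}(\zeta_{=0},\eta_{\geq 0},\tau_{>0}) = 0 = \xi_{\rm cation}$. The delicate point here is that although every bare cation now shares the same numerical value $\xi = 0$, this does not obstruct the singleton construction: singletons are formed from the cation as a chemical species (equivalently, from its $(\eta,\tau)$ data, since the stripped nucleus is still a well-defined physical object), not from its $\xi$-value alone. So the singletons $\{{\rm H}^+\}, \{{\rm He}^{2+}\}, \{{\rm Li}^{3+}\},\ldots$ remain distinct sets, and each is consistent with $f_{\rm cation}=\xi_{\rm cation}=0$; no two bare cations collapse into a single singleton because $(\eta,\tau)$ differs across chemical elements.

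The main obstacle, as I see it, is precisely this $\zeta = 0$ subcase, because at face value the statement ``uniqueness of $f_{\rm cation}$'' seems to fail when $f$ collapses to $0$ on the whole nuclear family. The plan is to handle this cleanly by emphasising that the singleton is indexed by the physical identity of the cation (its $(\eta,\tau)$ signature, inherited from the parent neutral atom in \ref{D8}), and that \ref{P10} is a statement about the \emph{value} of $\xi$ rather than about collapsing distinct physical species into one another. Once this is articulated, the rest follows by direct analogy with \ref{P11}$(a)$, so the body of the proof should be short, with essentially no calculation required beyond citing \ref{P4}$(b)$, \ref{D8}, and \ref{P10}.
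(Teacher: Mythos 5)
Your proof takes essentially the same route as the paper's: the paper likewise constructs the cation singletons $\{\chi_{i}^{\omega+}\}, \cdots, \{\chi_{j}^{\omega'+}\}$ from \ref{P4} and \ref{D8} under the condition $\zeta > \omega$ (so that at least one bounded electron remains), and disposes of the electronless case in a single sentence by citing \ref{P10} for $\xi_{\rm cation} = 0$. Your additional care about why distinct bare cations do not collapse into one singleton goes beyond what the paper states at this point, but it is consistent with the paper's later treatment in \ref{P21}, where the electronless cations are excluded from $^{\xi}\mathscr{P}$ and collected into the separate class $\mathcal{C}_{\rm nucleus}$.
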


\begin{proof} Following \ref{P4} and \ref{D8}, we can construct the singletons for any cation such that $\{\chi_{i}^{\omega+}\}, \cdots, \{\chi_{i}^{\omega'+}\}, \cdots, \{\chi_{j}^{\omega+}\}, \cdots, \{\chi_{j}^{\omega'+}\}$ if and only if $\zeta > \omega$ and $\zeta > \omega'$ where $\omega < \omega'$ and $\{i,j,\omega,\omega'\} \in \mathbb{N}^*$. An isolated electronless cation gives $\xi_{\rm cation} = 0$ as proven in \ref{P10}.
\end{proof}

\begin{definition}
\label{D13}
$(a)$ The energy-level spacing, $\xi^{\rm first} = E_{\rm U} - E^{\rm first}_{\rm I}$. Here, $E_{\rm U}$ is the most lowest unoccupied energy-level, while the top most electron occupied energy-level, $E^{\rm first}_{\rm I}$ is one level below $E_{\rm U}$ such that $|E_{\rm U}| < |E^{\rm first}_{\rm I}| < |E^{\rm second}_{\rm I}| < \cdots < |E^{(\zeta)}_{\rm I}|$. These electronic energy levels are defined between $0$ and $-\infty$ due to their boundedness. $(b)$ The energy-level, $E^{\rm first}_{\rm I}$ is known as the first ionization energy $($with the smallest magnitude$)$, and so forth for other core electrons $\blacksquare$
\end{definition}

\begin{proposition}
\label{P14}
For any isolated atom or ion such that $\zeta > 0$, the energy-level spacing is given by $\xi^{\rm first} = (E_{\rm U} - E^{\rm first}_{\rm I}) = -E^{\rm first}_{\rm I}$. In compact form, $\xi^{\rm first} = -E^{\rm first}_{\rm I}$, $\xi^{\rm second} = -E^{\rm second}_{\rm I}, \cdots, \xi^{(\zeta)} = -E^{(\zeta)}_{\rm I}$.
\end{proposition}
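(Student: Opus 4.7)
The plan is to invoke Definition~\ref{D13} directly and to show that, for any isolated atom or ion with $\zeta > 0$, the lowest unoccupied reference level satisfies $E_{\rm U} = 0$. Once this is established, the defining relation $\xi^{(k)} = E_{\rm U} - E^{(k)}_{\rm I}$ immediately yields $\xi^{(k)} = -E^{(k)}_{\rm I}$ for each $k = 1, 2, \ldots, \zeta$, which is exactly the compact form claimed. So the entire content of the proposition reduces to pinning down the zero-of-energy for the unoccupied threshold.

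To justify $E_{\rm U} = 0$, I would appeal to the zero-of-energy convention already used in the proof of \ref{P10} (see the discussion surrounding Eq.~(\ref{eq:4})): the total electronic energy vanishes when the bounded electron is removed to infinity, or equivalently to any finite distance at which the Coulomb interaction between the removed electron and the resulting cation is zero. This defines the ionization threshold. Definition~\ref{D13} decrees that all bounded electronic levels lie in $(-\infty,0)$ and orders them as $|E_{\rm U}| < |E^{\rm first}_{\rm I}| < \cdots < |E^{(\zeta)}_{\rm I}|$, so $E_{\rm U}$ is the level of strictly smallest magnitude. Reading $E_{\rm U}$ as the continuum edge (the only scale at which $|E_{\rm U}|$ can be driven below every occupied bound level) fixes $E_{\rm U} = 0$; substituting into Definition~\ref{D13} then gives $\xi^{\rm first} = 0 - E^{\rm first}_{\rm I} = -E^{\rm first}_{\rm I}$, and identically $\xi^{(k)} = -E^{(k)}_{\rm I}$ for the deeper core levels. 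Positivity of $\xi^{(k)}$ is automatic, since $E^{(k)}_{\rm I} \in (-\infty,0)$.

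The main obstacle is interpretive rather than computational: one must argue that ``the most lowest unoccupied energy-level'' in \ref{D13} is to be read as the ionization threshold rather than as some excited bound Rydberg-type orbital that happens to be empty in the ground state. The ordering $|E_{\rm U}| < |E^{\rm first}_{\rm I}|$ combined with $E_{\rm U} \in (-\infty,0]$ forces $E_{\rm U}$ to the limit $0$, since any bound excited state strictly below zero has a well-defined nonzero magnitude and there is no distinguished choice among them; the only canonical scale consistent with the strict inequality is the continuum edge itself. Once this reading is accepted, the proposition follows by a one-line substitution, and the relations $\xi^{(k)} = -E^{(k)}_{\rm I}$ hold uniformly for every occupied orbital $k = 1, \ldots, \zeta$ in the isolated atom or ion.
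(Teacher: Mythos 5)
Your proposal is correct and follows essentially the same route as the paper: the paper likewise reduces \ref{P14} to normalizing $E_{\rm U}$ to the ionization threshold, setting $E_{\rm U} = E_{r\rightarrow\infty} = 0$ via \ref{D15}, and then substituting into the definition $\xi^{(k)} = E_{\rm U} - E^{(k)}_{\rm I}$. Your closing discussion of why the continuum edge is the only workable reading of ``lowest unoccupied level'' even mirrors the paper's own remark that any finite-distance normalization $E_{\rm U} = E_{r\rightarrow|\textbf{r}|}$ would leave the magnitude of $E_{\rm U}$ undetermined.
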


\begin{proof}
It is immediately obvious that $E^{\rm first}_{\rm I}$, $E^{\rm second}_{\rm I}$, $\cdots$, $E^{(\zeta)}_{\rm I}$, and $E_{\rm U}$ are individually unique for any isolated atom or ion due to the uniqueness of $\xi$ from \ref{D3}, \ref{P4}, \ref{D13} and \ref{D15}.

\begin{definition}
\label{D15}
$(a)$ The first ionization energy $(E^{\rm first}_{\rm I})$ is the energy required to remove the first outer-most electron $($that has the smallest energy$)$ from any isolated chemical element or ion to a distance $r \rightarrow \infty$ such that there is no interaction between the ion and the removed electron. $(b)$ Any electron at $r \rightarrow \infty$ is not bounded to any nucleus, and therefore $E_{r \rightarrow \infty} = 0$ $\blacksquare$
\end{definition}

First, we normalize $E_{\rm U}$ such that all the atoms and ions with $\zeta > 0$ have the same exact reference point. This is possible if we take $E_{\rm U} \rightarrow E_{r \rightarrow \infty}$ such that $E_{\rm U} = E_{r \rightarrow \infty} = 0$. Due to this normalization, we can write (from \ref{D15}) $\xi^{\rm first} = (E_{\rm U} - E^{\rm first}_{\rm I}) = (E_{r \rightarrow \infty} - E^{\rm first}_{\rm I}) = (0 - E^{\rm first}_{\rm I}) = -E^{\rm first}_{\rm I}$, therefore, $\xi^{\rm first} = -E^{\rm first}_{\rm I}$ and so on for the second, third, and other core electrons.
\end{proof}

Of course, we could have normalized $E_{\rm U}$ such that $E_{\rm U} = E_{r \rightarrow |\textbf{r}|}$ where $|\textbf{r}|$ denotes a fixed arbitrary finite distance from the nucleus, but this will only lead us to the same problem of not being able to determine the magnitude of $E_{\rm U}$ because $E_{r \rightarrow |\textbf{r}|}$ is also unknown. Consequently, we can observe why $\xi^{(\zeta)} = -E_{\rm I}^{(\zeta)}$ for any atomic or ionic system for $\zeta > 0$. Hence, the proof for \ref{P14} is also a proof for \ref{P6}$(b)$.

\begin{proof} The following is a proof for the \textbf{Proposition \ref{P6}$(a)$} given earlier. For any isolated atom or ion such that $\zeta > 1$, there exist an exact averaged energy-level spacing for the case where there are more than one excited or polarized electrons to some higher energy-levels simultaneously. Hence, the averaged energy-level spacing for an arbitrary electron-donating chemical element, $\chi$ and its cation (after donating), $\chi^{\nu +}$ that has $\nu$ ($\nu < \zeta$) valence electrons polarized or excited to higher energies is given by 
\begin{eqnarray}
&\xi(\chi^{\nu +})& = \frac{1}{\nu}\big\{[E_{\rm U}(\chi) - E^{\rm first}_{\rm I}(\chi)] + \cdots + [E_{\rm U}(\chi) - E^{(\nu)}_{\rm I}(\chi)]\big\} \nonumber \\&& = \frac{1}{\nu}\big\{[E_{r \rightarrow |\textbf{r}|}(\chi) - E^{\rm first}_{\rm I}(\chi)] + \cdots + [E_{r \rightarrow |\textbf{r}|}(\chi) - E^{(\nu)}_{\rm I}(\chi)]\big\} \nonumber \\&& = -\frac{1}{\nu}\big\{E^{\rm first}_{\rm I}(\chi) + \cdots + E^{(\nu)}_{\rm I}(\chi)\big\}. \label{eq:5}
\end{eqnarray}
Using \ref{P4}, \ref{P14} and noting that $E_{\rm I} < 0$ due to boundedness (bounded electrons), one obtains
\begin{eqnarray}
&\xi(\chi^{\nu +})& = \frac{1}{\nu}\big\{\xi^{\rm first}(\chi) + \cdots + \xi^{\nu +}(\chi)\big\} \nonumber \\&& = \frac{1}{\nu}\big\{f(\zeta-1,\eta,\tau)(\chi^+) + \cdots + f(\zeta-\nu,\eta,\tau)(\chi^{\nu +})\big\} \nonumber \\&& = \bigg[\sum_{\nu}\frac{1}{\nu}f_{\nu}(\zeta,\eta,\tau)\bigg]^{\rm cation}_{\rm \chi^{\nu +}}. \label{eq:6}
\end{eqnarray}

\begin{proposition}
\label{P16}
Certain neutral atoms attract electrons from its neighboring cations to form an anion $(\chi^{\nu -})$ and therefore, its first ionization energy $(\xi(\chi^{+}) = E^{\rm first}_{\rm I})$ alone determines its ability to attract any not-fully bounded electron, or electrons from any neutral electron-donating neighboring atoms such that large $E^{\rm first}_{\rm I}$ implies a stronger attractive interaction between the electron-accepting nucleus and the not-fully bounded electron, or an electron originating from another electron-donating neutral atom $($after donating, this atom will form a cation$)$.
\end{proposition}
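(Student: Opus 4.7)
The plan is to reduce the multi-electron averaging formula of equation~\eqref{eq:6} to the single-electron-acceptance case, where only $E^{\mathrm{first}}_{\mathrm{I}}$ survives. First I would observe that the attractive capture of an electron by a neutral atom $\chi$ to form $\chi^{-}$ involves exactly one incoming electron, so that the summation index $\nu$ appearing in \ref{P6} and in equation~\eqref{eq:6} collapses to $\nu = 1$. By \ref{P14}, this isolates $\xi^{\mathrm{first}}(\chi) = -E^{\mathrm{first}}_{\mathrm{I}}(\chi)$ as the sole energy scale controlling the first capture event; all higher-order ionization energies are inert at this stage because they refer to electrons already tightly bound in core levels and thus do not participate in the initial attraction. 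This is consistent with the anion term in equation~\eqref{eq:1}, which by construction carries only the $\nu=1$ contribution.

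Next I would invoke \ref{D15} to interpret $E^{\mathrm{first}}_{\mathrm{I}}(\chi)$ physically as the magnitude of the net Coulombic binding experienced by the loosest-bound electron of $\chi$: it is precisely the work done against the nucleus-plus-screening potential to remove that electron to $r\to\infty$. Reversing this process, the same magnitude quantifies the energy that can be released when a not-fully bounded electron (or an electron originating from a neighboring electron-donating neutral atom) is pulled into the vicinity of the nucleus of $\chi$. Hence the comparative strength of the attractive interaction among different electron-acceptor candidates is determined entirely by $E^{\mathrm{first}}_{\mathrm{I}}$, and a larger $E^{\mathrm{first}}_{\mathrm{I}}$ translates directly into a stronger pull, which is the content of \ref{P16}.

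The hard part will be justifying that the incoming electron, which settles into what was the lowest unoccupied level $E_{\mathrm{U}}$ of $\chi$, is governed by the same $E^{\mathrm{first}}_{\mathrm{I}}$ that characterizes the already-bound outermost electron, since the two electrons in principle occupy different orbitals. I would close this gap by appealing to \ref{D9(b)}: the correspondence $\xi = f(\zeta,\eta,\tau) \leftrightarrow \varphi$ guarantees that once $\chi$ is fixed, the entire spectrum of energy-level spacings, including the spacing between $E_{\mathrm{U}}$ and the occupied levels, is fixed by the single unique function $f$. Thus the ranking of acceptor strength across different candidates $\chi$ is preserved by $E^{\mathrm{first}}_{\mathrm{I}}(\chi)$ alone, without requiring explicit knowledge of the wave function; this is precisely the wave-function-independent spirit of IET emphasized immediately after \ref{D9(b)}, and it completes the justification that a large first ionization energy implies a stronger attractive interaction on any incoming loosely bound electron.
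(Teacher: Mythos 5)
There is a genuine gap. The load-bearing step of your argument is the ``reversal'' in the second paragraph: you claim that because $E^{\rm first}_{\rm I}(\chi)$ is the work needed to remove the outer-most \emph{bound} electron of $\chi$ to $r\to\infty$ (\ref{D15}), the same magnitude quantifies the energy released when a \emph{different}, not-fully bounded electron is pulled in. That identification is exactly what needs to be proven, and it does not follow from anything you cite: the incoming electron settles into $E_{\rm U}$, not into the occupied level whose depth is $E^{\rm first}_{\rm I}$, and the energy of capture (an electron affinity) is a priori a different quantity from the first ionization energy. You acknowledge this yourself, but the appeal to \ref{D9(b)} does not close it --- \ref{D9(b)} only asserts a correspondence $\xi = f(\zeta,\eta,\tau) \leftrightarrow \varphi$, i.e.\ that the spectrum is \emph{fixed} once $\chi$ is fixed; it says nothing about the required \emph{monotone} link across different atoms $\chi$ between the size of $E^{\rm first}_{\rm I}(\chi)$ and the strength of the pull on an incoming loosely bound electron. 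Your opening reduction to $\nu=1$ is consistent with the remark following Eq.~(\ref{eq:13}) and with Eqs.~(\ref{eq:11})--(\ref{eq:12}), but that only restates which quantity is relevant for anions; it is not the content of \ref{P16}.

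The paper supplies the missing mechanism by an entirely different route: it models the three-body configuration (nucleus, outer-most electron, not-fully bounded electron) explicitly through the total potential energy of Eq.~(\ref{eq:7}), in which the electron--electron repulsion is renormalized by the screening factor $\exp[-\sigma(r_i+r_j)]$ with $\sigma = \mu\exp[-\tfrac{1}{2}\lambda\xi]$ from Eq.~(\ref{eq:8}). This makes the interaction strength an explicit function of $\xi = -E^{\rm first}_{\rm I}$. Definition \ref{D17} then fixes the two limiting cases, Eqs.~(\ref{eq:9}) and~(\ref{eq:10}), and the proof is completed by comparing $|\lim_{\xi\to\infty}V_{\rm total}|$ with $|\lim_{\xi\to 0}V_{\rm total}|$. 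Whatever one thinks of that physical model, it is the $\xi$-dependence of the screening in Eq.~(\ref{eq:8}) that actually ties the first ionization energy to the force felt by the incoming electron; your proposal never introduces any such quantitative link, so the central implication of \ref{P16} (large $E^{\rm first}_{\rm I}$ $\Rightarrow$ stronger attraction on the incoming electron) is asserted rather than derived.
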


\begin{proof}
It is sufficient to consider the attractive interaction between an anion-forming neutral atom and a nearby not-fully bounded electron. From \ref{D15}, large $E^{\rm first}_{\rm I}$ also implies a large energy is required to remove, or to excite, or to polarize the outer-most electron due to strong electron-nucleus attractive interaction. Thus, we need to prove that there is also a stronger attractive interaction between the nucleus and the not-fully bounded electron. If this is true, then there must be a strong electron-electron repulsion between the outer-most electron and the not-fully bounded electron due to this stronger attractive interaction. For example, the stronger electron-nucleus attractive interaction attracts both electrons (the outer-most and the not-fully bounded one) equally toward the nucleus. However, this stronger attraction also implies a stronger electron-electron repulsion between these two electrons, and therefore, there exists a large energy-level spacing between these two electrons. Therefore, the total potential energy, $V_{\rm total}$ becomes smaller, and all we need to do here is to show that this total potential energy is indeed smaller for large $E^{\rm first}_{\rm I}$. Here, $V_{\rm total}$ consists of two main terms, one of the terms refers to the bare Coulomb potential energy between the nucleus ($Z(+e)$) and the two electrons that carry the same charge, $-e$ each. Recall that these two electrons refer to the outer-most and the not-fully bounded electrons. The second term originates from the electron-electron repulsion, which needs to be renormalized to account for the screening effect. This renormalization also accounts for the different screening strengths due to different atoms. See \ref{D17} to understand what constitutes the screening effect. Thus, the renormalized screened Coulomb potential energy is between the outer-most electron (from the anion-forming atom) and the not-fully bounded electron. We can now write the total potential energy, 
\begin{eqnarray}
&V_{\rm total}& = \sum_{i\neq j}\frac{Z(+e)(-e)}{4\pi\epsilon_0r_i} + \frac{1}{2}\bigg[\frac{(-e)(-e)}{4\pi\epsilon_0|\textbf{r}_i-\textbf{r}_j|}\bigg]\exp[{-\sigma (r_i + r_j)}] \nonumber \\&& = -\frac{e^2}{4\pi\epsilon_0} \bigg[\sum_{i\neq j}\frac{Z}{r_i} - \frac{1}{2}\frac{1}{|\textbf{r}_i-\textbf{r}_j|}\exp[{-\sigma (r_i + r_j)}]\bigg], \label{eq:7}
\end{eqnarray}
\begin{eqnarray}
\sigma = \mu\exp\bigg[{-\frac{1}{2}\lambda\xi}\bigg]. \label{eq:8}
\end{eqnarray}
The atomic number, $Z = \tau$, $e$ denotes the magnitude of an electron charge, recall that $+e$ and $-e$ refer to the charge of a proton and an electron, respectively, while $\epsilon_0$ is the permittivity of free space. The constant, $\lambda = (12\pi\epsilon_0/e^2)a_{\rm B}$ where $a_{\rm B}$ is the Bohr radius of an atomic hydrogen. For convenience, one can assume $\textbf{r}_{i = 1}$ is the coordinate of the outer-most electron, while $\textbf{r}_{j = 2}$ is the coordinate of the not-fully bounded electron appears in the vicinity of the anion-forming neutral atom. The factor 1/2 in front of the second (renormalized) term in Eq.~(\ref{eq:7}) is to avoid $ij \neq ji$, and $\mu$ is the screening constant of proportionality. The renormalized second term in Eq.~(\ref{eq:7}) has been derived earlier [see equation (11) in Ref.~\cite{ADA4}] based on the renormalization procedure [see equations (2.11), (2.23) and (2.25) in Ref.~\cite{ADA5}].

\begin{definition} 
\label{D17} 
Screening is a well-known physical concept in atomic physics where an outer-most electron of a many-electron atom interact with the positively charged nucleus $(+Ze)$ with an effective nucleus charge less than $+Ze$ as seen by this outer-most electron. In other words, the outer-most electron is screened by the core electrons. In the absence of this screening
\begin{eqnarray}
\lim_{\xi\rightarrow\infty}V_{\rm total} = -\frac{e^2}{4\pi\epsilon_0} \bigg[\sum_{i \neq j}\frac{Z}{r_i} - \frac{1}{2}\frac{1}{|\textbf{r}_i-\textbf{r}_j|}\bigg]. \label{eq:9}
\end{eqnarray}    
This is also known as the bare Coulomb potential energy. In contrast, for a completely screened outer-most electron   
\begin{eqnarray}
\lim_{\xi\rightarrow 0}V_{\rm total} = -\frac{e^2}{4\pi\epsilon_0} \bigg[\sum_{i\neq j}\frac{Z}{r_i} - \frac{1}{2}\frac{1}{|\textbf{r}_i-\textbf{r}_j|}\exp[{-\mu(r_i + r_j)}]\bigg]. \label{eq:10}
\end{eqnarray}
Here, $\lim_{\xi \rightarrow \infty}V_{\rm total}$ gives the unscreened Coulomb potential energy with maximum electron-electron repulsion. On the other hand, $\lim_{\xi \rightarrow 0}V_{\rm total}$ denotes the so-called Thomas-Fermi free-electron screening [see equation (1.13) in Ref.~\cite{ADA5}]. The free electrons are still bounded within a system, but they do not individually nor collectively bounded to any atom. In other words, they are collectively confined within a system due to the many-body (many-electron and many-nucleus) potential [see equation (2.4) in Ref.~\cite{AM1}] $\blacksquare$ 
\end{definition}

Having defined that, it is now straightforward to see that $|\lim_{\xi \rightarrow \infty}V_{\rm total}| < |\lim_{\xi \rightarrow 0}V_{\rm total}|$ as required to complete the proof for \ref{P16} where $\xi = -E^{\rm first}_{\rm I}$ (from \ref{P14}), $E^{\rm first}_{\rm I} < 0$ (due to boundedness), and therefore $\xi > 0$ where $E^{\rm first}_{\rm I} = 0 = \xi$ imply unboundedness. One should also note here that $E^{\rm first}_{\rm I}$ is nothing but the energy of the outer-most bounded electron.
\end{proof}

\begin{definition} 
\label{D18} 
$(a)$ A free-electron solid gives rise to a metallic property of which, its resistivity (resistance to electron flow) is almost entirely due to electron-electron and electron-phonon scattering. Phonons are vibrating ions that scatter the electrons, and this is the only direct role played by the ions. The indirect role is to observe charge neutrality. In addition, these electrons are free to the extent that they are treated as non-interacting electron gas [see equations (1.1), (1.6), (2.4) and (2.25), and Chapter 3 in Ref.~\cite{AM1}]. Within IET, a precise definition can be constructed, namely,

$(b)$ Free-electron metals simply require $\xi = 0$, while the weakly interacting electrons (with some electron-electron and electron-phonon interaction) leads to Fermi liquid, which embeds the condition, $\xi \neq 0$ but its magnitude is an irrelevant constant [see equations (2.19) and (2.20) in Ref.~\cite{ADA6}] $\blacksquare$
\end{definition}

\begin{proposition} 
\label{P19} 
The energy-level crossings in free-electron metals are infinite. This means that the energy gap is definitely zero throughout the {\rm \textbf{k}}-space and therefore, $\xi = 0$. 
\end{proposition}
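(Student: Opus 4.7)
The plan is to exploit Definition \ref{D18}(b), which already characterizes a free-electron metal within IET by $\xi = 0$, and to make the three assertions in the statement (infinitely many energy-level crossings, vanishing gap throughout $\mathbf{k}$-space, and $\xi = 0$) mutually imply one another through the elementary band structure of non-interacting electrons.

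First I would invoke Definition \ref{D18}(a) to reduce the Hamiltonian of a free-electron metal to the pure kinetic-energy term, so that its eigenvalues form the continuous parabolic dispersion $E(\mathbf{k}) = \hbar^{2}|\mathbf{k}|^{2}/(2m)$. Folding this dispersion into the first Brillouin zone (equivalently, viewing it in the extended-zone scheme) produces infinitely many free-electron branches obtained by shifting the parabola by every reciprocal-lattice vector $\mathbf{G}$; any two such shifted parabolas intersect at a locus of $\mathbf{k}$-points, so the number of crossings is infinite. Because no potential is present to lift the degeneracy at these intersections (no avoided crossings can occur in the absence of $V$), the gap between the colliding branches is genuinely zero at each such point.

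Next I would argue that the infinite family of zero-gap crossings, combined with the continuity of $E(\mathbf{k})$, forces the separation between the topmost occupied level and the lowest unoccupied level to vanish on the Fermi surface. Applying Definition \ref{D13}, which sets $\xi = E_{\rm U} - E^{\rm first}_{\rm I}$, and Proposition \ref{P14}, which reads $\xi^{\rm first} = -E^{\rm first}_{\rm I}$ under the boundary normalization $E_{\rm U}\to 0$, both expressions collapse to zero once $E^{\rm first}_{\rm I}$ coincides with $E_{\rm U}$. This gives $\xi = 0$, completing the implication \emph{infinite crossings} $\Rightarrow$ \emph{zero gap} $\Rightarrow$ $\xi = 0$, and matching the defining condition in Definition \ref{D18}(b).

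The main obstacle I anticipate is making "infinite crossings" and "zero gap throughout $\mathbf{k}$-space" precise within the IET vocabulary, because $\xi$ was originally introduced in Definition \ref{D3} and Proposition \ref{P14} for atoms and ions with a discrete spectrum, while a free-electron metal has a genuinely continuous one. The bridge I would use is the observation that the $\mathbf{k}$-space dispersion supplies a continuous family of level spacings whose supremum over the Fermi surface is the natural generalization of the atomic $\xi$, and that this supremum vanishes precisely when the spectrum is gapless, tying the band-theoretic picture back to the scalar quantity $\xi$ used throughout the paper.
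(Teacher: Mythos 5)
Your argument is sound, but it is a genuinely different route from the one the paper takes. The paper proves \ref{P19} by contradiction: it assumes finitely many energy-level crossings, asserts that this forces the energy gap to be nonzero somewhere in $\textbf{k}$-space, and then invokes the externally cited biconditional ``zero gap for all $\textbf{k}$ if and only if $\xi = 0$'' (equation (2.1) of Ref.~\cite{ADA6} and equations (39), (49) of Ref.~\cite{ADA2}) to contradict the defining condition $\xi = 0$ of \ref{D18}$(b)$; the infinitude of crossings then follows by contraposition. You instead argue constructively from \ref{D18}$(a)$: reduce the Hamiltonian to pure kinetic energy, fold the parabolic dispersion over all reciprocal-lattice vectors to exhibit infinitely many degenerate branch intersections with no potential to open avoided crossings, and then read off $\xi = 0$ from \ref{D13} and \ref{P14} once the spectrum is seen to be gapless at the Fermi surface. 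What your approach buys is self-containedness and an actual exhibition of the infinite crossings, which the paper only obtains as the negation of an unproved assertion (``finite crossings implies a nonzero gap somewhere''); what it costs is the importation of band-theoretic machinery (reciprocal lattice, zone folding) that the paper never formalizes, plus the extra work of extending $\xi$, defined in \ref{D3} and \ref{P14} for discrete spectra, to a continuum --- a step you flag honestly and resolve with the supremum-over-the-Fermi-surface convention. One small caution: it is the continuity and density of the spectrum at the Fermi energy, rather than the crossings themselves (which generically sit at zone boundaries away from $E_{\rm F}$), that actually drives $E_{\rm U} - E^{\rm first}_{\rm I}$ to zero; your phrase ``combined with the continuity of $E(\textbf{k})$'' carries the real weight there, so you may wish to make that the primary cause rather than an auxiliary one.
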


\begin{proof}
Let us assume the energy-level crossings are finite, this means that the energy gap cannot be zero throughout the \textbf{k}-space. If we further assume $\xi$ is zero (from \ref{D18}$(b)$), then these two assumptions will contradict with each other. In particular, zero energy gap for all \textbf{k} is true if and only if $\xi$ is zero [see equation (2.1) in Ref.~\cite{ADA6}] and [see equations (39) and (49) in Ref.~\cite{ADA2}]. Therefore, the statement--- finite energy-level crossings can give rise to $\xi = 0$ is false, and consequently, finite energy-level crossings cannot lead to any free-electron metal.  
\end{proof}

We now continue with our proof for the proposition given in \ref{P6}$(a)$. Similar to Eqs.~(\ref{eq:5}) and~(\ref{eq:6}), the averaged energy-level spacing for an arbitrary electron-accepting chemical element, $\chi$ and its anion (after accepting), $\chi^{\nu -}$ is solely determined by $\xi(\chi^{+})$ (from \ref{P16}) where 
\begin{eqnarray}
&&\xi(\chi^{+}) = [E_{\rm U}(\chi) - E^{\rm first}_{\rm I}(\chi)] = [E_{r \rightarrow |\textbf{r}|}(\chi) - E^{\rm first}_{\rm I}(\chi)] = -E^{\rm first}_{\rm I}(\chi). \label{eq:11}
\end{eqnarray}
Using \ref{P4}, \ref{P14}, and noting that $E_{\rm I} < 0$ due to boundedness (bounded electrons), one obtains
\begin{eqnarray}
&&\xi(\chi^{+}) = \xi^{\rm first}(\chi) = f(\zeta-1,\eta,\tau)(\chi^+) = \bigg[f_{\nu = 1}(\zeta,\eta,\tau)\bigg]^{\rm anion}_{\rm \chi^{+}}. \label{eq:12}
\end{eqnarray}
Assume a hypothetical system, $\texttt{A}^{\nu_{\texttt{A}} +}_{a}\texttt{B}^{\nu_{\texttt{B}} -}_{b}\texttt{X}^{\nu_{\texttt{X}} +}_{x}\texttt{Y}^{\nu_{\texttt{Y}} -}_{y}$ where \texttt{B} and \texttt{Y} are anions, attracting $\nu_{\texttt{B}}$ and $\nu_{\texttt{Y}}$ electrons, respectively, while the cations donate $\nu_{\texttt{A}}$ and $\nu_{\texttt{X}}$ electrons, respectively, such that $\nu_{\texttt{A}} + \nu_{\texttt{X}} - \nu_{\texttt{B}} - \nu_{\texttt{Y}} = 0$ (for charge neutrality). Furthermore, $a$, $b$, $x$ and $y$ are the respective concentrations for the hypothetical chemical elements, \texttt{A}, \texttt{B}, \texttt{X} and \texttt{Y} in the above-stated hypothetical system. The average energy-level spacing for this system is given by (from Eqs.~(\ref{eq:6}) and~(\ref{eq:12})) 
\begin{eqnarray}
E_{\rm I}^{\rm system} &=& a\bigg[\sum_{\nu_{\texttt{A}}}\frac{1}{\nu_{\texttt{A}}}f_{\nu_{\texttt{A}}}(\zeta,\eta,\tau)\bigg]^{\rm cation}_{\rm \texttt{A}^{\nu_{\texttt{A}} +}} + x\bigg[\sum_{\nu_{\texttt{X}}}\frac{1}{\nu_{\texttt{X}}}f_{\nu_{\texttt{X}}}(\zeta,\eta,\tau)\bigg]^{\rm cation}_{\rm \texttt{X}^{\nu_{\texttt{X}} +}} \nonumber \\&& + b\bigg[f_{\nu = 1}(\zeta,\eta,\tau)\bigg]^{\rm anion}_{\rm \texttt{B}^{+}} + y\bigg[f_{\nu = 1}(\zeta,\eta,\tau)\bigg]^{\rm anion}_{\rm \texttt{Y}^{+}} \nonumber \\&=& \bigg[\sum_ql_q\sum_{\nu}\frac{1}{\nu}f_{q,\nu}(\zeta,\eta,\tau)\bigg]_{\rm cation} + \bigg[\sum_ql_qf_{q,\nu = 1}(\zeta,\eta,\tau)\bigg]_{\rm anion}. \label{eq:13}
\end{eqnarray}
Here, $q$ = \texttt{A}, \texttt{B}, \texttt{X} and \texttt{Y}, and therefore, $l_{\texttt{A}} = a$, $l_{\texttt{B}} = b$, $l_{\texttt{X}} = x$ and $l_{\texttt{Y}} = y$. Recall that $\nu < \zeta$ for any cation, and it denotes the number of outer electrons of an electron-donating atom interacting with an electron-accepting atom. Whereas, $\nu$ always equals one for all anions (due to \ref{P16}) where $\nu_{\texttt{B}}$ and $\nu_{\texttt{Y}}$ do not play any role in Eq.~(\ref{eq:13}). 

We are now in the final stage of proving the proposition \ref{P6}$(a)$. In what follows is the proof for $\xi^{\rm system} \propto E_{\rm I}^{\rm system}$ (recall \ref{D7}). The strategy here is to prove $E_{\rm I}^{\rm 
LiCl} > E_{\rm I}^{\rm NaCl} > E_{\rm I}^{\rm KCl} > E_{\rm I}^{\rm RbCl}$ corresponds to $\xi_{\rm LiCl} > \xi_{\rm NaCl} > \xi_{\rm KCl} > \xi_{\rm RbCl}$, and this is sufficient. Of course, one can construct any system from the periodic table such that $E_{\rm I}^{\rm system_1} > E_{\rm I}^{\rm system_2} > E_{\rm I}^{\rm system_3} > E_{\rm I}^{\rm system_4}$ corresponds to $\xi_{\rm LiCl} > \xi_{\rm NaCl} > \xi_{\rm KCl} > \xi_{\rm RbCl}$, but this is a false correspondence. For example, a system consisting of Si and C atoms satisfies $E_{\rm I}^{{\rm Si}_{x_1}{\rm C}_{1-x_1}} > E_{\rm I}^{{\rm Si}_{x_2}{\rm C}_{1-x_2}} > E_{\rm I}^{{\rm Si}_{x_3}{\rm C}_{1-x_3}} > E_{\rm I}^{{\rm Si}_{x_4}{\rm C}_{1-x_4}}$ if $\xi_{\rm C} > \xi_{\rm Si}$ (from Eq.~(\ref{eq:2})), and this nicely corresponds to $\xi_{\rm LiCl} > \xi_{\rm NaCl} > \xi_{\rm KCl} > \xi_{\rm RbCl}$ where $x_i$ denotes the concentration of Si atom, $x_1 < x_2 < x_3 < x_4$, $x_i \in (0,1)$ and $x_i \in \mathbb{R}^+$. Note here that increasing $x_i$ means decreasing concentration of carbon atoms, which in turn implies a decreasing magnitude for $E_{\rm I}^{{\rm Si}_{x_i}{\rm C}_{1-x_i}}$. But this is a false correspondence simply because the chemical elements and their concentrations in the system, Si$_{x_i}$C$_{1-x_i}$ is different by definition from the (Li,Na,K,Rb)Cl system. In fact, this is just one example of the lucky coincidences proven to exist.

The real system, Si$_{x_i}$C$_{1-x_i}$ introduced earlier consists of the chemical elements, Si and C such that $f(\zeta,\eta,\tau) = f_{{\rm Si}_{x_i}{\rm C}_{1-x_i}}(\zeta_{{\rm Si}_{x_i}{\rm C}_{1-x_i}},\eta_{{\rm Si}_{x_i}{\rm C}_{1-x_i}},\tau_{{\rm Si}_{x_i}{\rm C}_{1-x_i}})$. It is also known that each Si has $\zeta = \eta = \tau = 14$, while each C has $\zeta = \eta = \tau = 6$. Therefore, for any $x_i$ defined above, $f(\zeta,\eta,\tau) = f_{{\rm Si}_{x_i}{\rm C}_{1-x_i}}(\zeta_{{\rm Si}_{x_i}{\rm C}_{1-x_i}},\eta_{{\rm Si}_{x_i}{\rm C}_{1-x_i}},\tau_{{\rm Si}_{x_i}{\rm C}_{1-x_i}})$ is unique (from \ref{P4} and Eq.~(\ref{eq:2})) where $\zeta({\rm Si}_{x_i}{\rm C}_{1-x_i}) = \zeta({\rm Si}_{x_i}) + \zeta({\rm C}_{1-x_i}) = 14(x_i) + 6(1-x_i)$, $\eta({\rm Si}_{x_i}{\rm C}_{1-x_i}) = \eta({\rm Si}_{x_i}) + \eta({\rm C}_{1-x_i}) = 14(x_i) + 6(1-x_i)$, and $\tau({\rm Si}_{x_i}{\rm C}_{1-x_i}) = \tau({\rm Si}_{x_i}) + \tau({\rm C}_{1-x_i}) = 14(x_i) + 6(1-x_i)$. Consequently, 
\begin{eqnarray}
f_{{\rm Si}_{x_i}{\rm C}_{1-x_i}}(\zeta_{{\rm Si}_{x_i}{\rm C}_{1-x_i}},\eta_{{\rm Si}_{x_i}{\rm C}_{1-x_i}},\tau_{{\rm Si}_{x_i}{\rm C}_{1-x_i}}) &\propto& x_if_{{\rm Si}}(\zeta_{{\rm Si}},\eta_{{\rm Si}},\tau_{{\rm Si}}) \nonumber \\&& + (1-x_i)f_{{\rm C}}(\zeta_{{\rm C}},\eta_{{\rm C}},\tau_{{\rm C}}), \label{eq:14}
\end{eqnarray}
where the left-hand side is unique because both terms on the right-hand side are themselves unique. Therefore
\begin{eqnarray}
\xi_{{\rm Si}_{x_i}{\rm C}_{1-x_i}} \propto x_iE_{\rm I}^{{\rm Si}} + (1-x_i)E_{\rm I}^{{\rm C}}. \label{eq:15}
\end{eqnarray}
The protons and neutrons in Si$_{x_i}$C$_{1-x_i}$ do not form a single nucleus with its nucleon number equals $\eta_{{\rm Si}_{x_i}{\rm C}_{1-x_i}} + \tau_{{\rm Si}_{x_i}{\rm C}_{1-x_i}} = 2[14(x_i) + 6(1-x_i)]$. Instead, the nuclei of Si and C can arrange to form any crystalline or non-crystalline structure they ``see'' fit due to electron-electron and electron-nucleus interactions. This means that $\xi_{{\rm Si}_{x_i}{\rm C}_{1-x_i}} \rightarrow \xi_{{\rm Si}_{x_i}{\rm C}_{1-x_i}}(\textbf{k})$ where $\textbf{k} = 1/a_L$, $\textbf{k}$ and $a_L$ denote the wavevector and the lattice parameter, respectively. For example, for a crystal structure with repeating cubic unit cells, the lattice parameter $a_L$ defines the size of the unit cell, which is a constant in all three axes. 

\begin{definition} 
\label{D20} 
For $\textbf{k}$-dependent cases, 
\begin{eqnarray}
&&\xi_{{\rm Si}_{x_i}{\rm C}_{1-x_i}}(\textbf{k}) = f_{{\rm Si}_{x_i}{\rm C}_{1-x_i}}(\zeta_{{\rm Si}_{x_i}{\rm C}_{1-x_i}},\eta_{{\rm Si}_{x_i}{\rm C}_{1-x_i}},\tau_{{\rm Si}_{x_i}{\rm C}_{1-x_i}};\textbf{k}), \label{eq:15a}
\end{eqnarray}
implies $\xi_{{\rm Si}_{x_i}{\rm C}_{1-x_i}}$ is now $\textbf{k}$-dependent because $x_i$ is $\textbf{k}$-dependent. Similarly, for time($t$)-dependent cases, one has 
\begin{eqnarray}
&&\xi_{{\rm Si}_{x_i}{\rm C}_{1-x_i}} \rightarrow \xi_{{\rm Si}_{x_i}{\rm C}_{1-x_i}}(t) = f_{{\rm Si}_{x_i}{\rm C}_{1-x_i}}(\zeta_{{\rm Si}_{x_i}{\rm C}_{1-x_i}},\eta_{{\rm Si}_{x_i}{\rm C}_{1-x_i}},\tau_{{\rm Si}_{x_i}{\rm C}_{1-x_i}};t), \label{eq:15b}
\end{eqnarray}
because $x_i \rightarrow x_i(t)$. These correspondence rules can be used depending whether $x_i$ is a constant for a given $i$ or $x_i \rightarrow x_i(\textbf{k})$ or $x_i \rightarrow x_i(t)$ or $x_i \rightarrow x_i(\textbf{k},t) \rightarrow x_i(\textbf{k}(t))$ $\blacksquare$ 
\end{definition} 

Add to that, the exact numbers of electrons, protons and neutrons in a given non-atomic and non-molecular system can be calculated after converting the normalized atomic concentrations, $x_i \in (0,1)$ to chemical units of atomic concentrations per mole. Using Eqs.~(\ref{eq:14}) and~(\ref{eq:15})              
\begin{eqnarray}
\xi_{{\rm (Li,Na,K,Rb)Cl}} \propto E_{I}^{{\rm (Li,Na,K,Rb)Cl}}, \label{eq:16}
\end{eqnarray}
and therefore, $E_{\rm I}^{\rm LiCl} > E_{\rm I}^{\rm NaCl} > E_{\rm I}^{\rm KCl} > E_{\rm I}^{\rm RbCl}$ corresponds to $\xi_{\rm LiCl} > \xi_{\rm NaCl} > \xi_{\rm KCl} > \xi_{\rm RbCl}$.
\end{proof} 
In summary, we have developed a string of precise definitions and logical proofs required to establish what constitutes the ionization energy theory and its approximation. \textit{Nota bene}, all the listed definitions and propositions proven thus far never refer to any experimental observation \textit{a priori}. As a matter of fact, the fundamentals of ionization energy theory actually deal with the questions why and how the energy-level spacings are different, and unique for each chemical element, and also why and how such uniqueness can be extended to any system consisting of these chemical elements by making use of the relation $f(\zeta,\eta,\tau) = \xi$ (see \ref{P4}). However, IET does not provide any prescription on how to calculate $f(\zeta,\eta,\tau)$ or the energy-level spacings ($\xi$) for any chemical element or system. In fact, IET was never designed to do that. The irony here is that, even though ``proving'' IET was a non-trivial task, but applying the ionization energy theory to real systems (biological, chemical or physical) is surprisingly straightforward, unambiguous and can be made quantitative using the so-called ionization energy approximation (or also known as the ionization energy averaging). 

The relevant examples of such applications with quantitative analyses can be found in these reports~\cite{ADA4,ADA7,ADA8,AA1,ADA10,ADA11,ADA12,SYH1,AER1,ADA14,MM1,DHS1,ADA15,ADA16,ADA17,FS1,DR1,KD1,KE1,ADA18,ADA19,ADA20,radha}. Anyway, we are basically done listing the complete and self-consistent definitions, propositions and proofs needed for the ionization energy theory to be considered technically ``correct''. The applications referenced above do suggest the descriptions given by IET are also correct and sufficient, supported by the experiments. However, we cannot be sure whether they (the physical descriptions) are complete, at least, for now. The subsequent sections will provide the necessary setting to further elucidate why IET needs to be invoked so that the physical-sets, -classes and -categories can be constructed such that they have some basic mathematical structures. 

\section{Energy-level spacing physical categories and their constructions}

According to Geroch~\cite{RG1}, we need three things to properly define a mathematical category ($\mathscr{M}$). An abstract definition is given in \ref{D21} [see page 3 in Ref.~\cite{RG1}].

\begin{definition}
\label{D21}
$(i)$ A class, $\mathcal{C}$ has objects as their elements, and these objects ($\Omega_{\rm A}, \Omega_{\rm B}, \cdots$) can also be regarded as sets. \\
$(ii)$ A set denoted by $\texttt{Mor}(\Omega_{\rm A},\Omega_{\rm B})$ has elements, which are called morphisms from $\Omega_{\rm A}$ to $\Omega_{\rm B}$.\\
$(iii)$ Any morphism, $\Phi_{\Omega_{\rm A},\Omega_{\rm B}}$ is from $\Omega_{\rm A}$ to $\Omega_{\rm B}$, and any morphism $\Phi_{\Omega_{\rm B},\Omega_{\rm C}}$ is from $\Omega_{\rm B}$ to $\Omega_{\rm C}$. These morphisms can be written as a composition such that $\Phi_{\Omega_{\rm B},\Omega_{\rm C}} \circ \Phi_{\Omega_{\rm A},\Omega_{\rm B}}$ is from $\Omega_{\rm A}$ to $\Omega_{\rm C}$. The composition of $\Phi_{\Omega_{\rm B},\Omega_{\rm C}}$ with $\Phi_{\Omega_{\rm A},\Omega_{\rm B}}$ need to satisfy two additional conditions listed below.\\
$(a)$ Associativity--- suppose there are four objects, $\Omega_{\rm A}, \Omega_{\rm B}, \Omega_{\rm C}$ and $\Omega_{\rm D}$, and $\Phi_{\Omega_{\rm A},\Omega_{\rm B}}$, $\Phi_{\Omega_{\rm B},\Omega_{\rm C}}$ and $\Phi_{\Omega_{\rm C},\Omega_{\rm D}}$ are the morphisms, then 
\begin{eqnarray}
(\Phi_{\Omega_{\rm C},\Omega_{\rm D}} \circ \Phi_{\Omega_{\rm B},\Omega_{\rm C}}) \circ \Phi_{\Omega_{\rm A},\Omega_{\rm B}} = \Phi_{\Omega_{\rm C},\Omega_{\rm D}} \circ (\Phi_{\Omega_{\rm B},\Omega_{\rm C}} \circ \Phi_{\Omega_{\rm A},\Omega_{\rm B}}). \label{eq:17}
\end{eqnarray}
$(b)$ Identities exist--- for each object, $\Omega_{\rm A}, \Omega_{\rm B}, \cdots$, there exists an identity morphism, $\Phi^{\rm identity}_{\Omega_{\rm A},\Omega_{\rm A}}$ from $\Omega_{\rm A}$ to $\Omega_{\rm A}$ such that 
\begin{eqnarray}
&&\Phi_{\Omega_{\rm A},\Omega_{\rm B}} \circ \Phi^{\rm identity}_{\Omega_{\rm A},\Omega_{\rm A}} = \Phi_{\Omega_{\rm A},\Omega_{\rm B}}~~~{\rm and} \label{eq:18} \\&&
\Phi^{\rm identity}_{\Omega_{\rm A},\Omega_{\rm A}} \circ \Phi_{\Omega_{\rm B},\Omega_{\rm A}} = \Phi_{\Omega_{\rm B},\Omega_{\rm A}}. \label{eq:19}
\end{eqnarray}

In addition, $(ii)$ and $(iii)$ also imply that the objects (stated above) are ordered, and if there are two elements in a given object, then these elements form an ordered pair. This ordering is inapplicable for singletons $\blacksquare$
\end{definition}

\begin{proposition}
\label{P21}
All energy-level spacing physical categories $(^{\xi}\mathscr{P})$ satisfy $\ref{D21}$. 
\end{proposition}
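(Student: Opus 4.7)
The plan is to construct $^{\xi}\mathscr{P}$ explicitly by naming its objects, its morphism sets, and its composition rule, and then to verify in turn the associativity and identity axioms of \ref{D21}. The objects of the underlying class $\mathcal{C}$ should be the physical sets already assembled from the singletons $\{\chi_i\}$, $\{\chi_i^{\omega+}\}$, $\{\chi_i^{\alpha-}\}$ of \ref{D8}, \ref{P11} and \ref{P12}: namely atoms, ions, molecules and compounds, each carrying its unique label $f(\zeta,\eta,\tau)=\xi$ supplied by \ref{P4} and \ref{P5}. Fixing a specific value of $\xi$ (equivalently, a specific value of $f(\zeta,\eta,\tau)$ up to the proportionality of \ref{P6}) cuts out what I will call $^{\xi}\mathscr{P}$, so that the objects $\Omega_{\rm A},\Omega_{\rm B},\ldots$ inside a given $^{\xi}\mathscr{P}$ are physical systems sharing that $\xi$-datum.

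Next, I would define a morphism $\Phi_{\Omega_{\rm A},\Omega_{\rm B}}\in\texttt{Mor}(\Omega_{\rm A},\Omega_{\rm B})$ to be a map on the underlying constituent singletons of $\Omega_{\rm A}$ into those of $\Omega_{\rm B}$ that respects the $(\zeta,\eta,\tau)$-labelling and the correspondence rule $\xi=f(\zeta,\eta,\tau)\to\varphi$ of \ref{D9(b)}; the identity $\Phi^{\rm identity}_{\Omega_{\rm A},\Omega_{\rm A}}$ is then induced by the identity on $(\zeta,\eta,\tau)$. Composition is just set-theoretic composition of these maps on singletons. With this setup, associativity (\ref{eq:17}) reduces to the associativity of function composition, which is automatic, and the two identity equations (\ref{eq:18}) and (\ref{eq:19}) likewise follow directly from the identity axiom for functions. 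The ordering clause at the end of \ref{D21} is then handled by noting that ordered pairs appear naturally in the cation/anion composition of \ref{P6}, and is vacuous on singletons by fiat.

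The step I expect to be the main obstacle is showing that the morphism sets $\texttt{Mor}(\Omega_{\rm A},\Omega_{\rm B})$ are well-defined and non-empty in enough cases to make $^{\xi}\mathscr{P}$ non-trivial. The uniqueness results \ref{P4}, \ref{P5} and \ref{P12} are rigid: they force any candidate morphism to preserve the unique $f(\zeta,\eta,\tau)$ attached to each singleton, so the freedom in choosing $\Phi_{\Omega_{\rm A},\Omega_{\rm B}}$ is very limited and must be verified carefully. The right move is to use the averaging formula in \ref{P6}(a) and its specialisation Eq.~(\ref{eq:13}) to exhibit at least one admissible morphism whenever $\Omega_{\rm A}$ and $\Omega_{\rm B}$ share the prescribed $\xi$-datum, and to check that the composite of two such admissible morphisms remains admissible; once this is in place, \ref{D21}(iii)(a) and (b) are immediate.

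I would close by remarking that the very rigidity that makes the morphism construction delicate is also what is responsible for the statement, flagged already in the abstract, that distinct $^{\xi}\mathscr{P}$ and $^{\xi}\mathscr{P}'$ admit no structure-preserving functor between them: the proof of \ref{P21} thus naturally sets up the framework needed for \ref{P22} and ultimately for the categorical distinction of physical systems that the paper is after.
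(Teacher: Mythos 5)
Your construction diverges from the paper's at the very first step, and in a way that undermines the whole argument. You take the objects of a given $^{\xi}\mathscr{P}$ to be ``physical systems sharing that $\xi$-datum,'' i.e.\ you fix one value of $\xi$ and collect all systems with that value. But by the uniqueness results you yourself invoke (\ref{P4}, \ref{P5}), two distinct atoms, ions, molecules or compounds can never share the same $\xi$; so under your definition each $^{\xi}\mathscr{P}$ contains exactly one object, every $\texttt{Mor}(\Omega_{\rm A},\Omega_{\rm B})$ with $\Omega_{\rm A}\neq\Omega_{\rm B}$ is forced to be empty, and the category trivializes. You correctly identify non-emptiness of the hom-sets as ``the main obstacle,'' but you never overcome it --- and with your choice of objects it cannot be overcome, precisely because of the rigidity you cite. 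In the paper the superscript $\xi$ does not fix a value: $^{\xi}\mathscr{P}^{\rm atom}_{\rm ion}$ contains $\{\rm H\}$, $\{\rm He\}$, $\{\rm He^{+}\}$, $\{\rm Li\}$, \dots, all with different energy-level spacings, and the proof is split into three parts (atoms/ions, molecules/molecular ions, compounds) that you do not reproduce.

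The second divergence is in the morphisms. You define $\Phi_{\Omega_{\rm A},\Omega_{\rm B}}$ as a structure-preserving map on the underlying constituent singletons, with composition being set-theoretic composition of such maps. The paper instead takes a morphism to be the increment of particle numbers itself, e.g.\ $\Phi_{\{\rm H\},\{\rm He\}}=f(\zeta_{\rm H}+1,\eta_{\rm H}+2,\tau_{\rm H}+1)$, physically a nuclear-plus-chemical reaction converting one system into another; composition is addition of increments, and associativity and identities hold because addition in $\mathbb{Z}$ (for atoms, ions and molecules) or $\mathbb{R}$ (for compound concentrations $x_i$) is associative with a zero element. This distinction is not cosmetic: the $\mathbb{Z}$-versus-$\mathbb{R}$ nature of the increments is exactly what \ref{P22} and \ref{P25} later exploit, so a proof of \ref{P21} that does not exhibit these increment-morphisms explicitly does not set up the rest of the paper. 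Your appeal to Eq.~(\ref{eq:13}) to ``exhibit at least one admissible morphism whenever $\Omega_{\rm A}$ and $\Omega_{\rm B}$ share the prescribed $\xi$-datum'' therefore addresses a condition (shared $\xi$) that never arises in the paper's construction and, as noted above, can essentially never be satisfied by distinct objects.
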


\begin{proof}
This proof consists of three parts--- the first is for atoms and ions, the second is for molecules and molecular ions, while the last part is for compounds (excluding free-electron metals, see \ref{D18}).\\ \\
\texttt{Part I}: Atoms ($\chi_{i,j}$) and ions ($\chi_{i,j}^{\omega'+}$, $\chi_{i,j}^{\alpha-,\alpha'-}$)\\
We first construct the singletons from isolated atoms and ions by recalling \ref{P4}, \ref{D8} \ref{P11} and \ref{P12} such that $\zeta \geq 1$ (there is at least one bounded electron), and if $\zeta > \tau$ then $\alpha > 0$ giving rise to anions. For example, $\{\rm H^{\alpha'-}\}$, $\cdots$, $\{\rm H^{\alpha-}\}$, $\{\rm H\}$, $\{\rm He^{\alpha'-}\}$, $\cdots$, $\{\rm He^{\alpha-}\}$, $\{\rm He\}$, $\{\rm He^{+}\}$, $\{\rm Li^{\alpha'-}\}$, $\cdots$, $\{\rm Li^{\alpha-}\}$, $\{\rm Li\}$, $\{\rm Li^{+}\}$, $\{\rm Li^{2+}\}$, $\{\rm Be^{\alpha'-}\}$, $\cdots$, $\{\rm Be^{\alpha-}\}$, $\{\rm Be\}$, $\{\rm Be^{+}\}$, $\{\rm Be^{2+}\}$, $\{\rm Be^{3+}\}$, $\{\chi_i^{\alpha'-}\}$, $\cdots$, $\{\chi_i^{\alpha-}\}$, $\{\chi_i\}$, $\cdots$, $\{\chi_i^{\omega'+}\}$, $\cdots$, $\{\chi_j^{\alpha'-}\}$, $\cdots$, $\{\chi_j^{\alpha-}\}$, $\{\chi_j\}$, $\cdots$, $\{\chi_j^{\omega'+}\}$. Following \ref{P10}, we have to exclude these singletons, $\{\rm H^{+}\}$, $\{\rm He^{2+}\}$, $\{\rm Li^{3+}\}$, $\{\rm Be^{4+}\}$, $\cdots$, $\{\chi_i^{Z_i+}\}$, $\cdots$, $\{\chi_j^{Z_j+}\}$ because each and every one of them are electronless cations, and therefore $f(\zeta_{= 0},\eta_{\geq 0},\tau_{\geq 0}) = \xi = 0$ where $\alpha < \alpha'$, $\{i,j,\alpha,\alpha',\omega'\} \in \mathbb{N}^*$, $j > i$ and $Z_j > Z_i$ such that $\omega_i' \neq Z_i$ and $\omega_i' < Z_i$. 

On the other hand, the singletons with one or more bounded electrons ($\zeta_i \geq 1$) form the energy-level spacing set (also known as the object from \ref{D21}), and all of them (singletons) are members of the class of isolated atoms and ions, $\mathcal{C}^{\rm atom}_{\rm ion}$ where $\mathcal{C}^{\rm atom}_{\rm ion} \in$ $^{\xi}\mathscr{P}^{\rm atom}_{\rm ion}$, $^{\xi}\mathscr{P}^{\rm atom}_{\rm ion}$ is the atomic-ionic energy-level spacing physical category. Of course the sets of electronless cations can also be collected to form a class such that $\mathcal{C}_{\rm nucleus} \in$ $\mathscr{P}_{\rm nucleus}$ where $f(\zeta_{= 0},\eta_{\geq 0},\tau_{\geq 1}) \rightarrow f_{\rm nucleus}(\eta_{\geq 0},\tau_{\geq 1}) \neq 0$. Unlike $f(\zeta_{\geq 1},\eta_{\geq 0},\tau_{\geq 1})$, which has been logically proven to be equal to $\xi$ (see \ref{P4} and \ref{P10}), we have not found a suitable physical quantity that can also be logically associated to $f_{\rm nucleus}(\eta_{\geq 0},\tau_{\geq 1})$. Moreover, $f_{\rm nucleus}(\eta_{\geq 0},\tau_{\geq 1})$ cannot be associated to $\xi$ (because $\zeta = 0$), and therefore, $\mathscr{P}_{\rm nucleus}$ is a nucleus physical category, not an energy-level spacing physical category. 

Interestingly, each energy-level spacing physical category consists of all three things stated in \ref{D21}--- $^{\xi}\mathscr{P}^{\rm atom}_{\rm ion}$ has a class, $\mathcal{C}^{\rm atom}_{\rm ion}$ that has singletons as their elements ($\chi_{i,j}^{\alpha'-}$, $\chi_{i,j}^{\alpha-}$, $\chi_{i,j}$, $\chi_{i,j}^{\omega'+}$), such that $\{\chi_{i,j}^{\alpha'-}\}, \cdots, \{\chi_{i,j}^{\alpha-}\}, \{\chi_{i,j}\}, \cdots, \{\chi_{i,j}^{\omega'+}\} \in$ $\mathcal{C}^{\rm atom}_{\rm ion} \in$ $^{\xi}\mathscr{P}^{\rm atom}_{\rm ion}$. Each singleton in each of these sets, $\{\chi_i^{\alpha'-}\}$, $\cdots$, $\{\chi_i^{\alpha-}\}$, $\{\chi_i\}$, $\cdots$, $\{\chi_i^{\omega'+}\}$, $\cdots$ $\{\chi_j^{\alpha'-}\}$, $\cdots$, $\{\chi_j^{\alpha-}\}$, $\{\chi_j\}$, $\cdots$, $\{\chi_j^{\omega'+}\}$ is a physical entity consisting of at least one nucleus, with at least one bounded electron. Inside this nucleus, there must be at least one proton or a collection of any number of protons and neutrons. 

The second thing that we require is the existence of a set denoted by $\texttt{Mor}(\Omega_{\rm A},\Omega_{\rm B})$. In fact, we can obtain such morphisms by noting the first four ordered objects, which are the elements of $\mathcal{C}^{\rm atom}_{\rm ion}$ where $\Omega_{\rm A} = \{\rm H\}, \Omega_{\rm B} = \{\rm He\}, \Omega_{\rm C} = \{\rm He^{+}\}$, and $\Omega_{\rm D} = \{\rm Li\}$, and therefore, $\Phi_{\{\rm H\},\{\rm He\}}$ is the morphism from $\{\rm H\}$ to $\{\rm He\}$, $\Phi_{\{\rm He\},\{\rm He^{+}\}}$ is the morphism from $\{\rm He\}$ to $\{\rm He^{+}\}$ and $\Phi_{\{\rm He^{+}\},\{\rm Li\}}$ is the morphism from $\{\rm He^{+}\}$ to $\{\rm Li\}$ where $\Phi_{\{\rm H\},\{\rm He\}} \in \texttt{Mor}(\{\rm H\},\{\rm He\})$, $\Phi_{\{\rm He\},\{\rm He^{+}\}} \in \texttt{Mor}(\{\rm He\},\{\rm He^+\})$ and $\Phi_{\{\rm He^{+}\},\{\rm Li\}} \in \texttt{Mor}(\{\rm He^+\},\{\rm Li\})$. We can now make use of $f(\zeta,\eta,\tau)$ so that $\Phi_{\{\rm H\},\{\rm He\}} = f(\zeta_{\rm H}+1,\eta_{\rm H}+2,\tau_{\rm H}+1)$, $\Phi_{\{\rm He\},\{\rm He^+\}} = f(\zeta_{\rm He}-1,\eta_{\rm He}+0,\tau_{\rm He}+0)$, $\Phi_{\{\rm He^+\},\{\rm Li\}} = f(\zeta_{\rm He^+}+2,\eta_{\rm He^+}+3,\tau_{\rm He^+}+1)$, and these morphisms satisfy the associativity condition,
\begin{eqnarray}
\big(\Phi_{\{\rm He^{+}\},\{\rm Li\}} \circ \Phi_{\{\rm He\},\{\rm He^{+}\}}\big) \circ \Phi_{\{\rm H\},\{\rm He\}} &=& \Phi_{\{\rm H\},\{\rm Li\}} \nonumber \\&=& f(\zeta_{\rm H}+2,\eta_{\rm H}+5,\tau_{\rm H}+2),
\label{eq:20}
\end{eqnarray}
and
\begin{eqnarray}
\Phi_{\{\rm He^{+}\},\{\rm Li\}} \circ \big(\Phi_{\{\rm He\},\{\rm He^{+}\}} \circ \Phi_{\{\rm H\},\{\rm He\}}\big) &=& \Phi_{\{\rm H\},\{\rm Li\}} \nonumber \\&=& f(\zeta_{\rm H}+2,\eta_{\rm H}+5,\tau_{\rm H}+2), \label{eq:21}
\end{eqnarray}
and therefore
\begin{eqnarray}
(\Phi_{\{\rm He^{+}\},\{\rm Li\}} \circ \Phi_{\{\rm He\},\{\rm He^{+}\}}) \circ \Phi_{\{\rm H\},\{\rm He\}} = \Phi_{\{\rm He^{+}\},\{\rm Li\}} \circ (\Phi_{\{\rm He\},\{\rm He^{+}\}} \circ \Phi_{\{\rm H\},\{\rm He\}}). \nonumber \\&& \label{eq:22}
\end{eqnarray}
Here, for example, $f(\zeta_{\rm H}+1,\eta_{\rm H}+2,\tau_{\rm H}+1)$ physically involves a nuclear reaction such that $\eta_{\rm H}+2$ and $\tau_{\rm H}+1$ are obtained, and followed by electron-addition, $\zeta_{\rm H}+1$ (chemical reaction) to produce a new chemical element, He. In addition, both $\Phi_{\{\rm H\},\{\rm He\}}$ and $f(\zeta_{\rm H}+1,\eta_{\rm H}+2,\tau_{\rm H}+1)$ physically imply changing interaction strengths with respect to electron-electron and electron-nucleus interactions, and/or creation of new spin-spin interaction between electrons, and between electrons and the nucleus. This explains why IET is suitable to evaluate the changing interaction strengths in any quantum matter, excluding the free-electron metals. 

Finally, we show the existence of identities, $\Phi^{\rm identity}_{\{\rm H\},\{\rm H\}} = f(\zeta_{\rm H}+0,\eta_{\rm H}+0,\tau_{\rm H}+0)$ maps $\{\rm H\}$ to $\{\rm H\}$, $\Phi^{\rm identity}_{{\{\rm He}\},\{\rm He\}} = f(\zeta_{\rm He}+0,\eta_{\rm He}+0,\tau_{\rm He}+0)$ maps $\{\rm He\}$ to $\{\rm He\}$, and so on. These identities satisfy
\begin{eqnarray}
\Phi_{\{\rm H\},\{\rm He\}} \circ \Phi^{\rm identity}_{\{\rm H\},\{\rm H\}} = \Phi_{\{\rm H\},\{\rm He\}},
\label{eq:23}
\end{eqnarray}
and
\begin{eqnarray}
\Phi^{\rm identity}_{\{\rm He\},\{\rm He\}} \circ \Phi_{\{\rm H\},\{\rm He\}} = \Phi_{\{\rm H\},\{\rm He\}}. \label{eq:24}
\end{eqnarray}
Note that both associativity and identities exist because each mapping requires $\zeta + \mathbb{Z}, \eta + \mathbb{Z}$ and $\tau + \mathbb{Z}$ where $\mathbb{Z}$ is the set of integers.\\
\texttt{WARNING}:
\begin{eqnarray}
f(\zeta_{\rm He}-1,\eta_{\rm He}-2,\tau_{\rm He}-1) &\neq& f^{-1}(\zeta_{\rm H},\eta_{\rm H},\tau_{\rm H}) \nonumber \\&\neq& f^{-1}(\zeta_{\rm He},\eta_{\rm He},\tau_{\rm He}) \nonumber \\&=& f(\zeta_{\rm H},\eta_{\rm H},\tau_{\rm H}).
\label{eq:23warn}
\end{eqnarray} \\
\texttt{Part II}: Molecules ($\gamma_j$) and molecular ions ($\gamma_j^{\alpha'-}$, $\gamma_j^{\omega'+}$)\\
Each molecule and its molecular ion forms an energy-level spacing category, for example, $^{\xi}\mathscr{P}_{\rm H_2O}$, $^{\xi}\mathscr{P}_{\rm H_2}$, and so on such that the only member of the category, $^{\xi}\mathscr{P}_{\rm H_2}$ is the class, $\mathcal{C}_{\rm H_2}$ that has singletons as members, namely, $\{\rm H_2^{\alpha'-}\}$, $\cdots$, $\{\rm H_2\}$, $\cdots$, $\{\rm H_2^{\omega'+}\}$. Each singleton is a molecule or its molecular ion that can be denoted generally by $\gamma_j$ such that $\{\gamma_j^{\alpha'-}\}$, $\cdots$, $\{\gamma_j\}$, $\cdots$, $\{\gamma_j^{\omega'+}\} \in$ $\mathcal{C}_{\gamma_j} \in$ $^{\xi}\mathscr{P}_{\gamma_j}$. In other words, each $^{\xi}\mathscr{P}_{\gamma_j}$ represents only one type of molecule and its molecular ion, for example $\mathcal{C}_{\rm H_2O} \notin$ $^{\xi}\mathscr{P}_{\rm H_2}$. Hence, we have established the existence of a class. Next, we need to find the morphisms between these singletons where 
\begin{eqnarray}
&&\Omega_{\rm A} = \{\rm (H_2O)^{2-}\}, \nonumber \\&& \Omega_{\rm B} = \{\rm (H_2O)^{-}\}, \nonumber \\&& \Omega_{\rm C} = \{\rm H_2O\}, \nonumber \\&& \Omega_{\rm D} = \{\rm (H_2O)^{+}\}, \label{eq:24new1}
\end{eqnarray}
and therefore, $\Phi_{\{\rm (H_2O)^{2-}\},\{\rm (H_2O)^{-}\}}$ is the morphism from $\{\rm (H_2O)^{2-}\}$ to $\{\rm (H_2O)^{-}\}$, $\Phi_{\{\rm (H_2O)^{-}\},\{\rm H_2O\}}$ is the morphism from $\{\rm (H_2O)^{-}\}$ to $\{\rm H_2O\}$ and $\Phi_{\{\rm H_2O\},\{\rm (H_2O)^{+}\}}$ is the morphism from $\{\rm H_2O\}$ to $\{\rm (H_2O)^{+}\}$ where 
\begin{eqnarray}
&&\Phi_{\{\rm (H_2O)^{2-}\},\{\rm (H_2O)^{-}\}} \in \texttt{Mor}(\{\rm (H_2O)^{2-}\},\{\rm (H_2O)^{-}\}), \nonumber \\&& \Phi_{\{\rm (H_2O)^{-}\},\{\rm H_2O\}} \in \texttt{Mor}(\{\rm (H_2O)^{-}\},\{\rm H_2O\}), \nonumber \\&& \Phi_{\{\rm H_2O\},\{\rm (H_2O)^{+}\}} \in \texttt{Mor}(\{\rm H_2O\},\{\rm (H_2O)^{+}\}). \label{eq:24new2}
\end{eqnarray}
Similar to atoms and ions in Part I, we again make use of $f(\zeta,\eta,\tau)$ such that $\Phi_{\{\rm (H_2O)^{2-}\},\{\rm (H_2O)^{-}\}} = f_{\rm (H_2O)^{2-}}(\zeta -1,\eta +0,\tau +0)$, $\Phi_{\{\rm (H_2O)^{-}\},\{\rm H_2O\}} = f_{\rm (H_2O)^{-}}(\zeta -1,\eta +0,\tau +0)$, $\Phi_{\{\rm H_2O\},\{\rm (H_2O)^+\}} = f_{\rm H_2O}(\zeta -1,\eta +0,\tau +0)$, and these morphisms do satisfy the associativity condition,    
\begin{eqnarray}
&&\big(\Phi_{\{\rm H_2O\},\{\rm (H_2O)^{+}\}} \circ \Phi_{\{\rm (H_2O)^{-}\},\{\rm H_2O\}}\big) \circ \Phi_{\{\rm (H_2O)^{2-}\},\{\rm (H_2O)^{-}\}} \nonumber \\&& = \Phi_{\{\rm (H_2O)^{2-}\},\{\rm (H_2O)^{+}\}} = f_{\rm (H_2O)^{2-}}(\zeta -3,\eta +0,\tau +0),
\label{eq:25}
\end{eqnarray}
and
\begin{eqnarray}
&&\Phi_{\{\rm H_2O\},\{\rm (H_2O)^{+}\}} \circ \big(\Phi_{\{\rm (H_2O)^{-}\},\{\rm H_2O\}} \circ \Phi_{\{\rm (H_2O)^{2-}\},\{\rm (H_2O)^{-}\}}\big) \nonumber \\&& = \Phi_{\{\rm (H_2O)^{2-}\},\{\rm (H_2O)^{+}\}} = f_{\rm (H_2O)^{2-}}(\zeta -3,\eta +0,\tau +0), \label{eq:26}
\end{eqnarray}
and therefore Eq.~(\ref{eq:25}) = Eq.~(\ref{eq:26}) satisfying Eq.~(\ref{eq:17}). Identities do exist as required, in particular, 
\begin{eqnarray}
\Phi^{\rm identity}_{\rm \{(H_2O)^{2-}\},\{(H_2O)^{2-}\}} = f_{\rm (H_2O)^{2-}}(\zeta +0,\eta +0,\tau +0), \label{eq:26new1}
\end{eqnarray}
maps $\{\rm (H_2O)^{2-}\}$ to $\{\rm (H_2O)^{2-}\}$, and 
\begin{eqnarray}
\Phi^{\rm identity}_{\rm \{(H_2O)^{-}\},\{(H_2O)^{-}\}} = f_{\rm (H_2O)^{-}}(\zeta +0,\eta +0,\tau +0), \label{eq:26new2}
\end{eqnarray}
maps $\{\rm (H_2O)^{-}\}$ to $\{\rm (H_2O)^{-}\}$ such that
\begin{eqnarray}
&&\Phi_{\{\rm (H_2O)^{2-}\},\{\rm (H_2O)^{-}\}} \circ \Phi^{\rm identity}_{\{\rm (H_2O)^{2-}\},\{\rm (H_2O)^{2-}\}} = \Phi_{\{\rm (H_2O)^{2-}\},\{\rm (H_2O)^{-}\}},
\label{eq:28}
\end{eqnarray}
and
\begin{eqnarray}
&&\Phi^{\rm identity}_{\{\rm (H_2O)^{-}\},\{\rm (H_2O)^{-}\}} \circ \Phi_{\{\rm (H_2O)^{2-}\},\{\rm (H_2O)^{-}\}} = \Phi_{\{\rm (H_2O)^{2-}\},\{\rm (H_2O)^{-}\}}. \label{eq:29}
\end{eqnarray}
We now proceed to the last part. \\ \\
\texttt{Part III}: Compounds ($\beta_j$)\\
We now recall the compound, Si$_{x_i}$C$_{1-x_i}$, which forms the category, $^{\xi}\mathscr{P}_{{\rm Si}_{x_i}{\rm C}_{1-x_i}}$, and each different compound, $\beta_j$ forms its own category, $^{\xi}\mathscr{P}_{\beta_j}$ where $\beta_1 =$ Si$_{x_i}$C$_{1-x_i}$ is just one of them and $\beta_2$ is another compound, other than the Si-C combination. The class, $\mathcal{C}_{\beta_j} \in$ $^{\xi}\mathscr{P}_{\beta_j}$, $\mathcal{C}_{\beta_j} \notin$ $^{\xi}\mathscr{P}_{\beta_i}$ and as usual, each class has singletons as members, namely, $\{{\rm Si}_{x_1}{\rm C}_{1-x_1}\}$, $\cdots$, $\{{\rm Si}_{x_2}{\rm C}_{1-x_2}\}$, $\cdots$, where $\{i,j\} \in \mathbb{N}^*$, $\{x_i\} \in (0,1) \in \mathbb{R}^+$, and $\{{\rm Si}_{x_i}{\rm C}_{1-x_i}\}$ is a singleton for each $i$. Subsequently, similar to Parts I and II, morphisms can be easily defined to satisfy the associativity condition (see Eq.~(\ref{eq:17}))   
\begin{eqnarray}
&&\big(\Phi_{\{{\rm Si}_{x_2}{\rm C}_{1-x_2}\},\{{\rm Si}_{x_1}{\rm C}_{1-x_1}\}} \circ \Phi_{\{{\rm Si}_{x_3}{\rm C}_{1-x_3}\},\{{\rm Si}_{x_2}{\rm C}_{1-x_2}\}}\big) \circ \Phi_{\{{\rm Si}_{x_4}{\rm C}_{1-x_4}\},\{{\rm Si}_{x_3}{\rm C}_{1-x_3}\}} \nonumber \\&& = \Phi_{\{{\rm Si}_{x_4}{\rm C}_{1-x_4}\},\{{\rm Si}_{x_1}{\rm C}_{1-x_1}\}} \nonumber \\&& = f_{{\rm Si}_{x_1}{\rm C}_{1-x_1}}\big(\big[\zeta_{{\rm Si}_{x_1}{\rm C}_{1-x_1}} + (x_4 - x_1)\zeta_{\rm Si} + (x_1 - x_4)\zeta_{\rm C}\big], \big[\eta_{{\rm Si}_{x_1}{\rm C}_{1-x_1}} \nonumber \\&& + (x_4 - x_1)\eta_{\rm Si} + (x_1 - x_4)\eta_{\rm C}\big], \big[\tau_{{\rm Si}_{x_1}{\rm C}_{1-x_1}} + (x_4 - x_1)\tau_{\rm Si} + (x_1 - x_4)\tau_{\rm C}\big]\big), \nonumber \\ \label{eq:30}
\end{eqnarray}
and
\begin{eqnarray}
&&\Phi_{\{{\rm Si}_{x_2}{\rm C}_{1-x_2}\},\{{\rm Si}_{x_1}{\rm C}_{1-x_1}\}} \circ \big(\Phi_{\{{\rm Si}_{x_3}{\rm C}_{1-x_3}\},\{{\rm Si}_{x_2}{\rm C}_{1-x_2}\}} \circ \Phi_{\{{\rm Si}_{x_4}{\rm C}_{1-x_4}\},\{{\rm Si}_{x_3}{\rm C}_{1-x_3}\}}\big) \nonumber \\&& = \Phi_{\{{\rm Si}_{x_4}{\rm C}_{1-x_4}\},\{{\rm Si}_{x_1}{\rm C}_{1-x_1}\}} \nonumber \\&& = f_{{\rm Si}_{x_1}{\rm C}_{1-x_1}}\big(\big[\zeta_{{\rm Si}_{x_1}{\rm C}_{1-x_1}} + (x_4 - x_1)\zeta_{\rm Si} + (x_1 - x_4)\zeta_{\rm C}\big], \big[\eta_{{\rm Si}_{x_1}{\rm C}_{1-x_1}} \nonumber \\&& + (x_4 - x_1)\eta_{\rm Si} + (x_1 - x_4)\eta_{\rm C}\big], \big[\tau_{{\rm Si}_{x_1}{\rm C}_{1-x_1}} + (x_4 - x_1)\tau_{\rm Si} + (x_1 - x_4)\tau_{\rm C}\big]\big), \nonumber \\ \label{eq:31}
\end{eqnarray}
where Eq.~(\ref{eq:30}) = Eq.~(\ref{eq:31}). It is also straightforward for one to obtain
\begin{eqnarray}
&&\Phi_{\{{\rm Si}_{x_1}{\rm C}_{1-x_1}\},\{{\rm Si}_{x_2}{\rm C}_{1-x_2}\}} \circ \Phi^{\rm identity}_{\{{\rm Si}_{x_1}{\rm C}_{1-x_1}\},\{{\rm Si}_{x_1}{\rm C}_{1-x_1}\}} = \Phi_{\{{\rm Si}_{x_1}{\rm C}_{1-x_1}\},\{{\rm Si}_{x_2}{\rm C}_{1-x_2}\}}, \nonumber \\&& \label{eq:32}
\end{eqnarray}
and
\begin{eqnarray}
&&\Phi^{\rm identity}_{\{{\rm Si}_{x_2}{\rm C}_{1-x_2}\},\{{\rm Si}_{x_2}{\rm C}_{1-x_2}\}} \circ \Phi_{\{{\rm Si}_{x_1}{\rm C}_{1-x_1}\},\{{\rm Si}_{x_2}{\rm C}_{1-x_2}\}} = \Phi_{\{{\rm Si}_{x_1}{\rm C}_{1-x_1}\},\{{\rm Si}_{x_2}{\rm C}_{1-x_2}\}}. \nonumber \\&& \label{eq:33}
\end{eqnarray}
This completes the proof for \ref{P21}. \end{proof}
      
\begin{proposition}
\label{P22}
$(a)$ Inverses, closure and identities do not exist for $\mathcal{C}_{\chi} \in$ $^{\xi}\mathscr{P}^{\mathbb{Z}}_{\chi}$ and $\mathcal{C}_{\gamma} \in$ $^{\xi}\mathscr{P}^{\mathbb{Z}}_{\gamma}$. $(b)$ Inverses and identities do not exist for $\mathcal{C}_{\beta} \in$ $^{\xi}\mathscr{P}^{\mathbb{R}}_{\beta}$, and $(c)$ in view of $\ref{P21}$--- $\mathcal{C}_{\chi}$ and $\mathcal{C}_{\gamma}$ form semicategories while $\mathcal{C}_{\beta}$ forms a semigroup. 
\end{proposition}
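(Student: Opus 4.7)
The plan is to handle parts (a), (b), (c) sequentially, exploiting the explicit shift description of morphisms $\Phi = f(\zeta+n,\eta+m,\tau+k)$ built in \ref{P21}, and then to read off (c) as a bookkeeping corollary. For part (a) applied to $\mathcal{C}_{\chi}$, the failure of inverses would come directly from the \texttt{WARNING} displayed as Eq.~(\ref{eq:23warn}): although the arithmetic shift $(-1,-2,-1)$ numerically undoes $(+1,+2,+1)$, the warning records $f(\zeta_{\rm He}-1,\eta_{\rm He}-2,\tau_{\rm He}-1) \neq f^{-1}(\zeta_{\rm H},\eta_{\rm H},\tau_{\rm H})$, so no two-sided algebraic inverse of $\Phi_{\{\rm H\},\{\rm He\}}$ exists. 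The failure of closure would follow from the observation that a composition of morphisms accumulates an integer shift on $(\zeta,\eta,\tau)$ that can drive $\zeta$ down to $0$; the resulting target — $\{\rm H^+\}$, $\{\rm He^{2+}\}$, and so on — was explicitly removed from $\mathcal{C}^{\rm atom}_{\rm ion}$ in the proof of \ref{P21} via \ref{P10}, so the composite has no admissible target. For identities, I would emphasize that the arrows $\Phi^{\rm identity}_{\Omega,\Omega} = f_\Omega(\zeta+0,\eta+0,\tau+0)$ produced in \ref{P21} are \emph{local} unit arrows attached to individual objects; by \ref{P4} the values $f_\Omega$ are all distinct, so no single element of $\mathcal{C}_{\chi}$ serves as a two-sided unit for every other element under $\circ$. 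Read with the superscript $\mathbb{Z}$ on $^{\xi}\mathscr{P}^{\mathbb{Z}}_{\chi}$ interpreted as ``the class equipped with a \emph{global} $\mathbb{Z}$-shift composition'', this is precisely the claim ``identities do not exist''.

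For the molecular half of (a), the same three arguments transfer verbatim to $\mathcal{C}_{\gamma}$: inverses fail by the same \texttt{WARNING}-type obstruction, closure fails because stripping a molecular ion of all electrons yields a bare nucleus-cluster outside $^{\xi}\mathscr{P}_{\gamma_j}$, and identities fail because each $\Phi^{\rm identity}_{\{\gamma\},\{\gamma\}}$ depends on $\gamma$. For part (b), the inverse and identity obstructions transfer unchanged to $\mathcal{C}_{\beta}$: the \texttt{WARNING} still rules out inverses, and the identity arrow $\Phi^{\rm identity}_{\{{\rm Si}_{x_i}{\rm C}_{1-x_i}\},\{{\rm Si}_{x_i}{\rm C}_{1-x_i}\}}$ depends on $x_i$, so there is no global unit. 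The decisive point is that (b) deliberately omits closure: because the shift parameter now lives in $\mathbb{R}$ (superscript $\mathbb{R}$ on $^{\xi}\mathscr{P}^{\mathbb{R}}_{\beta}$) and parameterizes a continuous concentration $x_i \in (0,1)$ rather than an integer electron count, every legitimate composition lands back in $\mathcal{C}_{\beta}$, so closure does hold — this is exactly what distinguishes the compound case from the atomic/ionic and molecular ones.

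Part (c) then falls out as a direct corollary: associativity was established for all three classes in \ref{P21}, so a class with associative partial composition but no global identity is by definition a semicategory, which is the claim for $\mathcal{C}_{\chi}$ and $\mathcal{C}_{\gamma}$; and a set with a globally defined associative composition and no identity is a semigroup, giving the claim for $\mathcal{C}_{\beta}$. The main obstacle will be the apparent tension between \ref{P21}, which explicitly exhibits identity morphisms, and \ref{P22}, which asserts that identities do not exist: the resolution — that \ref{P21} produces a \emph{family} of local unit arrows, one per object, while \ref{P22} asks for a single global identity element under the $\mathbb{Z}$- or $\mathbb{R}$-shift composition indicated by the superscripts on $^{\xi}\mathscr{P}^{\mathbb{Z}}$ and $^{\xi}\mathscr{P}^{\mathbb{R}}$ — must be stated very carefully, so that the two propositions do not appear to contradict one another.
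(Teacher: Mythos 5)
Your decomposition into (a), (b), (c) and your final conclusions agree with the paper, and you correctly identify that the presence of closure for $\mathcal{C}_{\beta}$ (continuous $x_i \in (0,1)$) versus its absence for $\mathcal{C}_{\chi}$ and $\mathcal{C}_{\gamma}$ is what separates the semigroup claim from the semicategory claim. However, your justifications for the failure of inverses and of identities are genuinely different from the paper's. For inverses, the paper does not invoke the \texttt{WARNING} of Eq.~(\ref{eq:23warn}) at all: it writes the would-be inverse as $f^{-1}(-\zeta_{\chi},-\eta_{\chi},-\tau_{\chi})$ (Eqs.~(\ref{eq:34})--(\ref{eq:36})) and declares it nonexistent because no atom, ion, molecule or compound can have a negative number of electrons, neutrons or protons. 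Note that your route has a logical gap the paper's does not: Eq.~(\ref{eq:23warn}) only certifies that one particular candidate (the reverse shift) fails to be an inverse, which does not by itself exclude every possible inverse. For identities, the paper's ``identity element'' is the object $f_{\chi}(0,0,0)$ (respectively $f_{\gamma}(0,0,0)$, $f_{\beta}(0,0,0)$), i.e.\ the image of the empty system with zero particles, which ``does not exist by definition''; it is not your global-versus-local unit distinction. Your proposed reconciliation of \ref{P21} with \ref{P22} is therefore not the paper's: the paper distinguishes identity \emph{morphisms} (zero shifts, which exist by \ref{P21}) from identity \emph{elements} (the empty object $f(0,0,0)$, which does not), whereas you distinguish per-object units from a single global unit. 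Finally, for closure in part (a) the paper's obstruction is that certain integer shifts land on unphysical $(\zeta,\eta,\tau)$ combinations (e.g.\ a multi-proton nucleus with $\eta = 0$, or a molecular ion with $\alpha' \gg \zeta_{\gamma}$ excess electrons), not your mechanism of driving $\zeta$ to zero onto the electronless cations excluded via \ref{P10}; both are instances of ``the composite has no admissible target,'' so yours is an acceptable alternative, but it is not the one in the text.
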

\begin{proof}
We have introduced some compact notations, $^{\xi}\mathscr{P}^{\mathbb{Z}}_{\chi}$, $^{\xi}\mathscr{P}^{\mathbb{Z}}_{\gamma}$ and $^{\xi}\mathscr{P}^{\mathbb{R}}_{\beta}$ where $\chi$ here refers to isolated atoms and ions, $\gamma$ denotes molecules and molecular ions, while $\beta$ represents any non-free-electron compound. Moreover, the superscript, $\mathbb{Z}$ implies one needs integers to map one element to another element where both elements are members of their respective physical sets, and these physical sets are members of either $\mathcal{C}_{\chi} \in$ $^{\xi}\mathscr{P}^{\mathbb{Z}}_{\chi}$ or $\mathcal{C}_{\gamma} \in$ $^{\xi}\mathscr{P}^{\mathbb{Z}}_{\gamma}$. Whereas, the superscript $\mathbb{R}$ implies one requires real numbers for mappings between elements contained in $\mathcal{C}_{\beta} \in$ $^{\xi}\mathscr{P}^{\mathbb{R}}_{\beta}$.\\
$(a)$ For every singleton, $\chi \in \mathcal{C}_{\chi} \in$ $^{\xi}\mathscr{P}^{\mathbb{Z}}_{\chi}$ and for every singleton, $\gamma \in \mathcal{C}_{\gamma} \in$ $^{\xi}\mathscr{P}^{\mathbb{Z}}_{\gamma}$, there is not a single inverse, $\chi^{-1} \in \mathcal{C}_{\chi} \in$ $^{\xi}\mathscr{P}^{\mathbb{Z}}_{\chi} \wedge \gamma^{-1} \in \mathcal{C}_{\gamma} \in$ $^{\xi}\mathscr{P}^{\mathbb{Z}}_{\gamma}$ exist such that  
\begin{eqnarray}
&&\chi\chi^{-1} = f(\zeta_{\chi},\eta_{\chi},\tau_{\chi}) f^{-1}(-\zeta_{\chi},-\eta_{\chi},-\tau_{\chi}) = f_{\chi}(0,0,0), \label{eq:34} \\&&
\gamma\gamma^{-1} = f(\zeta_{\gamma},\eta_{\gamma},\tau_{\gamma}) f^{-1}(-\zeta_{\gamma},-\eta_{\gamma},-\tau_{\gamma}) = f_{\gamma}(0,0,0), \label{eq:35}
\end{eqnarray}
because there is no such thing as molecules and/or atoms and/or ions and/or molecular ions with negative amount of particles (electrons, neutrons and protons) where $\{\eta_{\chi,\gamma}\} \in \mathbb{N}$, $\{\tau_{\chi,\gamma},\zeta_{\chi,\gamma}\} \in \mathbb{N}^*$ where $\mathbb{N}$ and $\mathbb{N}^*$ are the sets of natural numbers, including and excluding zero, respectively. Therefore, $f^{-1}(-\zeta_{\chi},-\eta_{\chi},-\tau_{\chi})$ and $f^{-1}(-\zeta_{\gamma},-\eta_{\gamma},-\tau_{\gamma})$ do not exist by definition, which imply the respective inverses, $\chi^{-1}$ and $\gamma^{-1}$ cannot exist either. The condition for closure is also not satisfied. In particular, a new chemical element or ion can never be formed for any $\zeta_{\chi} > 0$ and $\tau_{\chi} > 1$ if $\eta_{\chi} = 0$, and/or for any $\zeta_{\chi} \gg 1$ if $\tau_{\chi} = 1$ and $\eta_{\chi} = 0$. Likewise, for any molecule, $\gamma$, any molecular ion, $\gamma^{\alpha'-}$ cannot exist for any $\alpha'\gg \zeta_{\gamma}$ where $\alpha' \in$ $\mathbb{N}^*$. Identity elements, namely, $f_{\chi}(0,0,0)$ and $f_{\gamma}(0,0,0)$ for molecules, atoms, ions and molecular ions do not exist by definition. \\ \\
\texttt{REMINDER}: Any singleton that carries a specific name for any atom ($\chi_{i,j}$), ion ($\chi_{i,j}^{\omega'+}$, $\chi_{i,j}^{\alpha-,\alpha'-}$), molecule ($\gamma_j$), molecular ion ($\gamma_j^{\alpha'-}$, $\gamma_j^{\omega'+}$) or compound ($\beta_j$) is in itself meaningless and/or ``dead''. Unique energy-level spacings for each singleton give ``life'' (mathematical structures) to every energy-level spacing physical class, and therefore to every energy-level spacing physical category ($^{\xi}\mathscr{P}$) by means of $\xi = f(\zeta,\eta,\tau)$. \\
$(b)$ For any singleton, $\beta \in \mathcal{C}_{\beta} \in$ $^{\xi}\mathscr{P}^{\mathbb{R}}_{\beta}$, an inverse $\beta^{-1} \in \mathcal{C}_{\beta} \in$ $^{\xi}\mathscr{P}^{\mathbb{R}}_{\beta}$ does not exist such that  
\begin{eqnarray}
&&\beta\beta^{-1} = f_{\beta}\big(\big[x_1\zeta +x_2\zeta \cdots +x_i\zeta\big],\big[x_1\eta +x_2\eta \cdots +x_i\eta\big],\big[x_1\tau +x_2\tau \cdots +x_i\tau\big]\big) \nonumber \\&& f_{\beta}^{-1}\big(\big[-x_1\zeta -x_2\zeta \cdots -x_i\zeta\big],\big[-x_1\eta -x_2\eta \cdots -x_i\eta\big],\big[-x_1\tau -x_2\tau \cdots -x_i\tau\big]\big) \nonumber \\&& = f_{\beta}(0,0,0), \nonumber \\&& \label{eq:36}
\end{eqnarray}
because of the same reason stated above--- any compound with negative amount of particles cannot exist where $\{x_1, x_2, \cdots, x_i\} \in (0,1) \in \mathbb{R}$. This means that $f_{\beta}^{-1}\big(\big[-x_1\zeta -x_2\zeta \cdots -x_i\zeta\big],\big[-x_1\eta -x_2\eta \cdots -x_i\eta\big],\big[-x_1\tau -x_2\tau \cdots -x_i\tau\big]\big)$ does not exist. Now, it is straightforward to observe the existence of closure due to $\{x_1, x_2, \cdots, x_i\} \in (0,1) \in \mathbb{R}$ for a given compound. Again, identity elements, namely, $f_{\beta}(0,0,0)$ does not exist by definition for every compound and for every $x_i$ where $\{x_i\} \in (0,1)$. \\
$(c)$ From Parts I and II given in \ref{P21}, and from Eqs.~(\ref{eq:34}) and~(\ref{eq:35}), indeed $\mathcal{C}_{\chi} \in$ $^{\xi}\mathscr{P}^{\mathbb{Z}}_{\chi}$ and $\mathcal{C}_{\gamma} \in$ $^{\xi}\mathscr{P}^{\mathbb{Z}}_{\gamma}$ form semicategories in the absence of identities, inverses and closure. On the other hand, Part III in \ref{P21} and Eq.~(\ref{eq:36}) imply $\mathcal{C}_{\beta}$ $\in$ $^{\xi}\mathscr{P}^{\mathbb{R}}_{\beta}$ is a semigroup without identities and inverses.  
\end{proof}

In any case, you should be aware that there exist many other physical categories other than the energy-level spacing physical categories. One such example has been introduced earlier, which is the $\mathscr{P}_{\rm nucleus}$, and there are also other physical systems completely independent of $^{\xi}\mathscr{P}$, namely, a confined vacuum or any physical space containing only electromagnetic waves (photons), or gravitational fields, or unbounded (free) electrons, or any particles without any bounded electron, or any yet to be discovered particles (or sub-particles) or strings or any combination of them. These physical systems do not belong to the energy-level spacing physical categories constructed herein. The reason is that any physical system in the absence of any ``boundedness'' between a given nucleus and an electron has zero energy-level spacing. 

For example, the nucleus must contain at least one proton with at least one bounded electron, and in this case, this minimal system is an atomic hydrogen with distinct and discrete energy levels, which give rise to the well-defined energy-level spacings. In addition, the class of free-electron quantum matter (that has zero energy-level spacing) forms yet another physical category, $\mathscr{P}^{\rm free}_{\rm electron}$, which is independent of $^{\xi}\mathscr{P}$. All these ``other'' physical systems can exploit the knowledge from the available mathematical categories and spaces, depending on how the physical theories are formulated for these systems. The well-known mathematical categories are the category of groups, the category of vector spaces, the category of Lie algebras, the category of topological spaces, the Hilbert spaces and the Fock spaces~\cite{RG1,RS1}. 

In particular, quantum systems can be formulated within the quantum theory by means of some arbitrary wave functions, without directly taking into account the uniqueness of the chemical elements. On the other hand, the quantum systems that are formulated using the ionization energy theory belong to the renormalization group and quantum theories, and the formulations start from an entirely new notion, which is related to the uniqueness of energy-level spacings that exist in each chemical element, which has been proven to be valid in the earlier section. For classical systems however, one starts with some arbitrary continuous- and/or special-functions in accordance with classical physics. 

\section{Functors to compare $^{\xi}\mathscr{P}$}

To prove the existence of a mapping that properly maps one $^{\xi}\mathscr{P}$ to another $^{\xi}\mathscr{P}'$, one requires to invoke the notion of functors. This notion has been defined by Geroch [see page 90 in Ref.~\cite{RG1}], and is given in \ref{D24}.

\begin{definition}
\label{D24}
Let $\Omega_{\rm A}$, $\Omega_{\rm B}$, $\cdots$, $\Omega_{\rm Z}$ be the objects in category $\mathscr{C}$, while $\Gamma(\Omega_{\rm A})$, $\cdots$, $\Gamma(\Omega_{\rm Z})$ be the objects in category $\mathscr{C}'$. Given these background, a covariant functor $\Gamma$ needs two things--- $(i)$ a rule to associate each object in $\mathscr{C}$ to an object in $\mathscr{C}'$, and $(ii)$ a rule to associate each morphism, $\Phi_{\Omega_{\rm A},\Omega_{\rm B}}, \cdots$, in category $\mathscr{C}$ to a morphism $\Gamma(\Phi_{\Omega_{\rm A},\Omega_{\rm B}}), \cdots$, in category $\mathscr{C}'$. The rule in $(ii)$ needs to satisfy two other conditions listed below. \\
$(a)$ Composition is preserved--- Suppose $\Omega_{\rm A} \stackrel{\Phi_{\Omega_{\rm A},\Omega_{\rm B}}}{---\longrightarrow} \Omega_{\rm B} \stackrel{\Phi_{\Omega_{\rm B},\Omega_{\rm C}}}{---\longrightarrow} \Omega_{\rm C}$ is a diagram in $\mathscr{C}$, then there is a corresponding diagram in $\mathscr{C}'$ such that $\Gamma(\Omega_{\rm A}) \stackrel{\Gamma(\Phi_{\Omega_{\rm A},\Omega_{\rm B}})}{---\longrightarrow} \Gamma(\Omega_{\rm B}) \stackrel{\Gamma(\Phi_{\Omega_{\rm B},\Omega_{\rm C}})}{---\longrightarrow} \Gamma(\Omega_{\rm C})$ and therefore, the composition in $\mathscr{C}'$ is given by 
\begin{eqnarray}
&&\Gamma\big[\Phi_{\Omega_{\rm B}, \Omega_{\rm C}}\circ \Phi_{\Omega_{\rm A}, \Omega_{\rm B}}\big] = \Gamma(\Phi_{\Omega_{\rm B}, \Omega_{\rm C}})\circ \Gamma(\Phi_{\Omega_{\rm A}, \Omega_{\rm B}}). \label{eq:37}
\end{eqnarray}
$(b)$ Identities are preserved---For any object, $\Omega_{\rm A}$ in $\mathscr{C}$, there is a corresponding object $\Gamma(\Omega_{\rm A})$ in $\mathscr{C}'$ due to rule $(i)$, and therefore we have   
\begin{eqnarray}
&&\Gamma(\Phi^{\rm identity}_{\Omega_{\rm A},\Omega_{\rm A}}) = \Phi^{\rm identity}_{\Gamma\Omega_{\rm A},\Gamma\Omega_{\rm A}}. \label{eq:38}
\end{eqnarray} 
A contravariant functor $\Gamma$ needs two things--- $(i)$ a rule to associate each object in $\mathscr{C}$ to an object in $\mathscr{C}'$, and $(ii)$ a rule to associate each morphism, $\Phi_{\Omega_{\rm A},\Omega_{\rm B}}$, $\cdots$, in category $\mathscr{C}$ to a morphism $\Gamma(\Phi_{\Omega_{\rm B},\Omega_{\rm A}})$, $\cdots$, in category $\mathscr{C}'$. The rule in $(ii)$ also needs to satisfy two other conditions listed below. \\
$(c)$ Composition is preserved--- Suppose $\Omega_{\rm A} \stackrel{\Phi_{\Omega_{\rm A},\Omega_{\rm B}}}{---\longrightarrow} \Omega_{\rm B} \stackrel{\Phi_{\Omega_{\rm B},\Omega_{\rm C}}}{---\longrightarrow} \Omega_{\rm C}$ is a diagram in $\mathscr{C}$, then there is a corresponding diagram in $\mathscr{C}'$ such that $\Gamma(\Omega_{\rm C}) \stackrel{\Gamma(\Phi_{\Omega_{\rm C},\Omega_{\rm B}})}{---\longrightarrow} \Gamma(\Omega_{\rm B}) \stackrel{\Gamma(\Phi_{\Omega_{\rm B},\Omega_{\rm A}})}{---\longrightarrow} \Gamma(\Omega_{\rm A})$ and therefore, the composition in $\mathscr{C}'$ is given by 
\begin{eqnarray}
&&\Gamma\big[\Phi_{\Omega_{\rm B}, \Omega_{\rm C}}\circ \Phi_{\Omega_{\rm A}, \Omega_{\rm B}}\big] = \Gamma(\Phi_{\Omega_{\rm B}, \Omega_{\rm A}})\circ \Gamma(\Phi_{\Omega_{\rm C}, \Omega_{\rm B}}). \label{eq:39}
\end{eqnarray} 
$(d)$ Identities are preserved--- See $(b)$ above $\blacksquare$
\end{definition}

\begin{proposition}
\label{P25}
Neither a proper covariant nor a contravariant functor exist between any two $^{\xi}\mathscr{P}$.
\end{proposition}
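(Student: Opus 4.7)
The plan is to argue by contradiction: assume a proper (structure-preserving) covariant functor $\Gamma: {^{\xi}\mathscr{P}} \to {^{\xi}\mathscr{P}'}$ exists between two distinct energy-level spacing physical categories, and derive an incompatibility with the uniqueness results established earlier. First I would fix the meaning of ``proper'' by anchoring it to the correspondence rule of \ref{D9(b)}, so that a proper $\Gamma$ is one that respects the identification $\xi = f(\zeta,\eta,\tau)$ both at the level of objects (i.e., the singleton-to-$\xi$ tag) and at the level of morphisms (i.e., the additive $(\Delta\zeta,\Delta\eta,\Delta\tau)$-data that specifies a change of physical system). Without this anchoring the proposition would be vacuous, since a purely formal constant functor could always be defined.

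Next I would invoke \ref{P4} and \ref{P5} to recall that every singleton $\Omega$ in any class $\mathcal{C} \in {^{\xi}\mathscr{P}}$ is uniquely tagged by $\xi_\Omega = f(\zeta,\eta,\tau)$, where $f$ is intrinsically tied to the specific constituents of that physical system. If $\Gamma$ sends $\Omega \mapsto \Gamma(\Omega)$ with $\Gamma(\Omega) \in \mathcal{C}' \in {^{\xi}\mathscr{P}'}$, then $\xi_{\Gamma(\Omega)} = f'(\zeta',\eta',\tau')$ is, by the very disjointness of the two categories, a value assigned to a different physical entity. Thus the object rule of $\Gamma$ already fails to preserve $\xi$-values in any canonical way; any attempt to do so would identify distinct unique tags, contradicting \ref{P5}.

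The core technical step is the morphism-side obstruction. From \ref{P21}, the three species of categories carry three genuinely incompatible parametrizations: $^{\xi}\mathscr{P}^{\mathbb{Z}}_\chi$ admits arbitrary integer triples $(\Delta\zeta,\Delta\eta,\Delta\tau) \in \mathbb{Z}^3$ subject to physical validity, $^{\xi}\mathscr{P}^{\mathbb{Z}}_{\gamma_j}$ is restricted to the one-dimensional sub-lattice where only $\Delta\zeta$ varies (molecular ionization only), and $^{\xi}\mathscr{P}^{\mathbb{R}}_{\beta_j}$ is parametrized by real concentration shifts $x_i \in (0,1) \subset \mathbb{R}$. Under composition these triples add. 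Hence, for Eq.~(\ref{eq:37}) to hold, the morphism rule of $\Gamma$ would have to intertwine the source and target lattices by an additive map, which must either collapse genuinely distinct source $\Delta$-data to the same target morphism (so the source $f$'s uniqueness information, and therefore the uniqueness of $\xi$, is lost, violating properness) or demand target morphisms (e.g., nucleon shifts in a fixed-molecule category, or integer shifts in a real-parametrized compound category) that simply do not exist. The contravariant case is handled identically by reversing arrows and applying the same lattice-incompatibility.

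The hardest part, as I see it, is not the structural incompatibility itself but making ``proper'' precise enough that the contradiction is sharp rather than merely intuitive, since purely formal semicategorical mappings can always be cooked up if one ignores the $f$-structure. I would therefore devote careful attention to showing that the correspondence $\xi = f(\zeta,\eta,\tau)$ is essential to the identity of a $^{\xi}\mathscr{P}$, so that the uniqueness results of \ref{P4} and \ref{P5} can do the heavy lifting and the clash of the $\mathbb{Z}$-type and $\mathbb{R}$-type $\Delta$-structures in \ref{P22} closes the argument.
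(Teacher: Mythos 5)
Your proposal reaches the paper's conclusion but by a genuinely different route. The paper's proof is a concrete case analysis on two pairs of categories: for $^{\xi}\mathscr{P}^{\mathbb{Z}}_{\chi}$ versus $^{\xi}\mathscr{P}^{\mathbb{Z}}_{\gamma}$ it observes that the only available relation is the proportionality $\xi_{\rm H_2^+O^{2-}} \propto 2\xi_{\rm H^+} + \xi_{\rm O^+}$, which ties a molecule only to its two constituent elements rather than giving a rule on all objects; for $^{\xi}\mathscr{P}^{\mathbb{Z}}_{\gamma}$ versus $^{\xi}\mathscr{P}^{\mathbb{R}}_{\beta}$ it notes that the object correspondence is partial --- dilute ${\rm Si}_{x_i \ll x_j}$ is molecule-like and admits a one-to-one relation, while solid ${\rm Si}_{x_i \gg x_j}$ has no counterpart in $\mathcal{C}_{\gamma_j}$ --- so no total object rule, hence no functor, exists. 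Your argument instead works uniformly for all pairs: you anchor ``proper'' to the correspondence $\xi = f(\zeta,\eta,\tau)$ of \ref{D9(b)}, use the uniqueness results of \ref{P4} and \ref{P5} to obstruct the object rule, and then exploit the incompatibility of the additive morphism data ($\mathbb{Z}^3$ for atoms and ions, the $\Delta\zeta$-only sublattice for a fixed molecule, real concentration shifts for compounds) to obstruct the preservation of composition in Eq.~(\ref{eq:37}). The morphism-side obstruction is something the paper's proof never actually examines --- it stops at the object level --- so your argument is in that respect more complete; conversely, the paper's examples make vivid \emph{why} the object rule already fails (partiality of the physical correspondence), which your more abstract uniqueness argument leaves implicit. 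Your observation that ``proper'' must be given content to rule out trivial constant functors is well taken and is not addressed explicitly in the paper either; making that precise is indeed where the real work lies, and neither your sketch nor the paper fully discharges it, so to the standard of rigor the paper itself sets, your proposal is acceptable.
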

\begin{proof}
It is sufficient to compare $(i)$ $^{\xi}\mathscr{P}^{\mathbb{Z}}_{\chi}$ with $^{\xi}\mathscr{P}^{\mathbb{Z}}_{\gamma}$ and $(ii)$ $^{\xi}\mathscr{P}^{\mathbb{Z}}_{\gamma}$ with $^{\xi}\mathscr{P}^{\mathbb{R}}_{\beta}$. $(i)$ From Part I in \ref{P21}, $\{\chi_{i,j}^{\alpha'-}\}, \cdots, \{\chi_{i,j}^{\alpha-}\}, \{\chi_{i,j}\}, \cdots, \{\chi_{i,j}^{\omega'+}\} \in$ $\mathcal{C}_{\chi_{i,j}} \in$ $^{\xi}\mathscr{P}^{\mathbb{Z}}_{\chi}$, and from Part II in \ref{P21}, $\{\gamma_j^{\alpha'-}\}$, $\cdots$, $\{\gamma_j\}$, $\cdots$, $\{\gamma_j^{\omega'+}\} \in$ $\mathcal{C}_{\gamma_j} \in$ $^{\xi}\mathscr{P}^{\mathbb{Z}}_{\gamma}$. Let $\gamma$ = H$_2$O, which implies there are only two elements ($\chi_i$ = H and $\chi_j$ = O) that can be related between $\mathcal{C}_{\chi_{i,j}}$ and $\mathcal{C}_{\gamma_j}$ such that $\xi_{\rm H^+_2O^{2-}} \propto 2\xi_{\rm H^+} + \xi_{\rm O^+}$. The proportionality here follows the proof given in \ref{P6}$(a)$ (after \ref{D15}) and the proof given in \ref{P6}$(b)$. $(ii)$ From Part III in \ref{P21}, $\mathcal{C}_{\beta_j} \in$ $^{\xi}\mathscr{P}^{\mathbb{R}}_{\beta}$, and if we let $\beta$ = Si$_{x_j}$, then for $x_i \gg x_j$, Si atoms may form a solid and therefore, there is not a single relation between $\mathcal{C}_{\beta_j}$ and $\mathcal{C}_{\gamma_j}$ because Si$_{x_i \gg x_j}$ (solids) do not exist in $\mathcal{C}_{\gamma_j}$. On the other hand, for $x_i \ll x_j$, Si atoms may be molecule-like. In this case, there is a one-to-one relation between $\mathcal{C}_{\beta_j}$ and $\mathcal{C}_{\gamma_j}$ such that $\xi_{{\rm Si}_{x_i \ll x_j}} = \xi_{{\rm Si}_{x_i \ll x_j}}$. Consequently, we cannot build any proper functor between $\mathcal{C}_{\beta_j}$ and $\mathcal{C}_{\gamma_j}$, and therefore, a proper functor (covariant or contravariant) does not exist between any two $^{\xi}\mathscr{P}$.      
\end{proof}
The proposition given in \ref{P25} implies the existence of different classes of quantum matter, namely, $(i)$ atoms and ions, $(ii)$ molecules and molecular ions, and $(iii)$ compounds such that there can be many categories within molecules and compounds (including nanoparticles), which can be constructed from the class of atoms and ions. In particular, for molecules we have the category of H$_2$O$_2$, the category of H$_2$SO$_4$, and so on. While for the compounds, we have the categories for free electron metals, the categories for superconductors, the categories for ferromagnets, and so on.  

\section*{Acknowledgments}

This work was supported by Sebastiammal Innasimuthu, Arulsamy Innasimuthu, Amelia Das Anthony, Malcolm Anandraj and Kingston Kisshenraj. Special thanks to Mir Massoud Aghili Yajadda (CSIRO, Lindfield) and Alexander Jeffrey Hinde (The University of Sydney) for providing some of the listed references.

\end{document}